\pgfplotsset{compat=1.17}
\pgfplotsset{compat=1.17}
\tikzstyle{none}=[inner sep=0mm]
\tikzset{new style 0/.style={circle,draw}}
\newtheorem{theorem}{Theorem}
\newtheorem{corollary}{Corollary}
\newtheorem{lemma}{Lemma}
\newtheorem{proposition}{Proposition}
\newtheorem{assumption}{ASSUMPTION}
\newtheorem{definition}{Definition}
\newtheorem{remark}{Remark}
\newtheorem{testableprediction}{Testableprediction}
\newenvironment{proof}{\paragraph{Proof:}}{\hfill$\square$}
\title{Pairwise Beats All-at-Once: Behavioral Gains from Sequential Choice Presentation}
\author{Dipankar Das\\Assistant Professor, Goa Institute of Management\\Email ID:dipankar@gim.ac.in;dipankar3das@gmail.com
}
\begin{document}
\maketitle
\begin{abstract}
	This paper presents the Sequential Rationality Hypothesis, which argues that consumers are better able to make utility-maximizing decisions when products appear in sequential pairwise comparisons rather than in simultaneous multi-option displays. Although this involves higher cognitive costs than the all-at-once format, the current digital market, with its diverse products listed by review ratings, pricing, and paid products, often creates inconsistent choices. The present work shows that preparing the list sequentially supports more rational choice, as the consumer tries to minimize cognitive costs and may otherwise make an irrational decision. If the decision remains the same on both offers, then that is a consistent preference. The platform uses this approach by reducing cognitive costs while still providing the list in an all-at-once format rather than sequentially. To show how sequential exposure reduces cognitive overload and prevents context-dependent errors, we develop a bounded attention model and extend the monotonic attention rule of the random attention model to theorize the sequential rational hypothesis. Using a theoretical design with common consumer goods, we test these hypotheses. This theoretical model helps policymakers in digital market laws, behavioral economics, marketing, and digital platform design consider how choice architectures may improve consumer choices and encourage rational decision-making.
\end{abstract}
\noindent\textbf{Keywords:} Sequential Rationality Hypothesis, Random Attention Model, bounded attention, choice theory, monotonicity, choice architecture, digital markets.
\pagebreak
\section{Introduction}
Consumers frequently encounter decision environments in digital markets characterized by high product diversity and digital interfaces where multiple options are presented simultaneously. On the other hand, classical economic models presuppose full rationality and stable preferences, which implies that agents can process and evaluate all alternatives concurrently to select the utility-maximizing option. Empirical evidence consistently reveals that human cognitive capacities are bounded. As established in foundational works by Simon \cite{simon1955behavioral}, Kahneman \cite{kahneman2003maps}, and Gabaix \cite{gabaix2014sparsity}, decision-makers often operate under informational and attention constraints, leading to deviations from full rationality.\\
Recent developments in the economics of recommender systems underscore the impact of choice architecture on consumer behavior in digital markets\cite{jannach2010recommender, del2016towards, de2024economic}. While recommender systems are designed to guide consumers through informational complexity, they often present alternatives simultaneously, which can lead to cognitive overload and a decline in decision quality. Against this backdrop, this paper introduces the \textit{Sequential Rationality Hypothesis}, which posits that a sequential, pairwise presentation of options can systematically enhance decision quality by aligning the structure of the choice environment with cognitive processing capabilities.\\
Using a simple numerical example, the \textit{Sequential Rationality Hypothesis} can be explained. Consider a hypothetical consumer choosing among three simple beverages: Coffee ($A$), Tea ($B$), and Juice ($D$). Each option has an associated utility reflecting the consumer's true preference:
 
 \begin{center}
 	\begin{tabular}{|c|c|}
 		\hline
 		\textbf{Product} & \textbf{True Utility} \\
 		\hline
 		Coffee ($A$) & $u(A) = 8$ \\
 		Tea ($B$) & $u(B) = 6$ \\
 		Juice ($D$) & $u(D) = 7$ \\
 		\hline
 	\end{tabular}
 \end{center}
 Let the consumer be shown all three products in the standard choice setting, viz. $\{A, B, D\}$, simultaneously, and is expected to select the most preferred option. Under the full attention model and assuming perfect rationality, the consumer would compare all three options and choose Coffee ($A$), the option that maximizes utility.\\
However, under limited attention, the consumer may only compare a subset, such as Tea ($B$) and Juice ($D$), and inadvertently overlook Coffee. If Juice is chosen based on this partial comparison, the decision is suboptimal:
\begin{equation}
	C(A,B,D) = D \quad \text{(irrational choice due to attention omission)}
\end{equation}
\indent  Now consider a sequential choice structure. In Step 1, the consumer has to choose between Coffee ($A$) and Tea ($B$). Since $u(A) > u(B)$, and the comparison is focused, Coffee is chosen:
 \begin{equation}
 	C(A,B) = A
 \end{equation}
 In Step 2, the consumer must choose between the previously selected Coffee ($A$) and Juice ($D$). Again, since $u(A) > u(D)$, Coffee is chosen:
 \begin{equation}
 	C(C(A,B), D) = A
 \end{equation}
 The final choice is Coffee ($A$), the utility-maximizing product.\\
\indent The Sequential Rationality Hypothesis is motivated by the theoretical literature on bounded rationality, list-based choice models, and limited attention frameworks \cite{cattaneo2020random,masatlioglu2012revealed,rubinstein2006model}. Whereas classical revealed preference theory assumes consideration of all alternatives, empirical research shows that attention is often confined to a subset, known as the consideration set \cite{hauser1990evaluation, goeree2008limited, van2010retrieving, honka2017advertising, reutskaja2011search}. Rubinstein and Salant \cite{rubinstein2006model} provide an axiomatic foundation for choice from lists, emphasizing how order effects shape the construction of preferences. However, such models remain static and do not capture dynamic, sequential attention shifts. Our hypothesis builds upon this gap by introducing a recursive model of binary comparisons under cognitive constraints, establishing conditions under which step-wise exposure enhances utility-maximizing outcomes.\\
\indent The central theoretical insight is that when consumers compare two options at a time, they avoid attention overload, pay cognitive costs, and are more likely to make consistent, rational decisions. Here the cognitive cost is the weighted sum of information process \cite{bettman1990componential,DRNSMSSJ}.This has been tested mathematically and has a logical formulation of the theory. This sequential offering acts as a cognitive filter that incrementally preserves the better alternatives, thereby reducing cognitive overload and enhancing decision-making. The sequential offering acts as a cognitive filter that incrementally preserves the better alternatives.
 Formally, when attention is limited then the following holds.
 \begin{equation}
 	\Pr(C(C(A,B),D) = \arg\max\{u(A), u(B), u(D)\}) > \Pr(C(A,B,D) = \arg\max\{u(A), u(B), u(D)\})
 \end{equation}
 The visual presentation of options in digital marketplaces has a significant influence on consumer decisions. Based on the following figure of Amazon's recommendation interface (Figure \ref{fig:srh-tree}), we explain user behavior through the lens of the \textit{Sequential Rationality Hypothesis} (SRH) using simple logic. The hypothesis proposes that sequential binary exposure mitigates attention overload and facilitates utility-maximizing decisions under bounded rationality, making us acutely aware of the influence of digital interfaces on consumer behavior.\\
The Amazon recommendation system presents a grid-like layout where alternatives are displayed both horizontally across rows and vertically across scrollable columns. This grid includes advertisements, promotional labels (e.g., "Deal of the Day"), brand tags (e.g., "Amazon's Choice"), and price discounts, all of which act as cognitive distractions unrelated to intrinsic utility.\\
From a behavioral point of view the following are the beatification may be emerged (i) horizontally aligned options force users to compare several visually similar products simultaneously, (ii) vertically stacked rows necessitate scrolling, introducing temporal fragmentation of attention,(iii) salient design cues (e.g., discount tags, star ratings, number of reviews) divert focus from utility-relevant attributes, (iv) the lack of structured comparison logic increases decision fatigue and lowers the probability of selecting the utility-maximizing option.\\
These factors align with the prediction of bounded attention models, where users fail to process all alternatives fully and are instead influenced by framing and salience \cite{kahneman2003maps, gabaix2014sparsity, bordalo2013salience}.\\
We develop a choice-theoretic model in which bounded attention and cognitive viability thresholds determine the selection path. Through a set of theorems, lemmas, corollaries, and propositions with illustrative counterexamples, we show, for example, that $C(C(A, B), D) \neq C(A, B, D)$ when attention is limited. This divergence departs from standard rational choice theory and provides a theoretical justification for pairwise sequential choice.\\
This research bridges the gap between cognitive limitations and economic rationality by integrating behavioral theory, formal modeling, and experimental design. For digital platforms and policymakers, our results suggest that optimal design of recommendation flows and disclosures should accommodate bounded rationality by staging comparisons sequentially. This could lead to more effective recommendation systems and better consumer decisions. For economic theorists, the Sequential Rationality Hypothesis advances a middle ground between full rationality and unstructured heuristics by embedding cognitive realism into decision theory. From the platform's point of view, the strategy of the choice architecture is to reduce the cognitive costs and make the lists like all at once format or simultaneously, so that the consumer would tend to buy the dedicated or targeted products. This is possible because there are no well-defined digital market laws.\\ \indent The article is structured as follows. It begins with a literature review, followed by a model section that explores sequential choice and probability of rationality outcomes, presenting the main theoretical results. This is then supported by a detailed discussion of the theoretical arguments for the sequential rationality hypothesis, a key aspect of the article. The section concludes with a summary of the conclusions, policy implications, and limitations.
 \begin{figure}[H]
 	\centering
 	\includegraphics[width=0.8\textwidth]{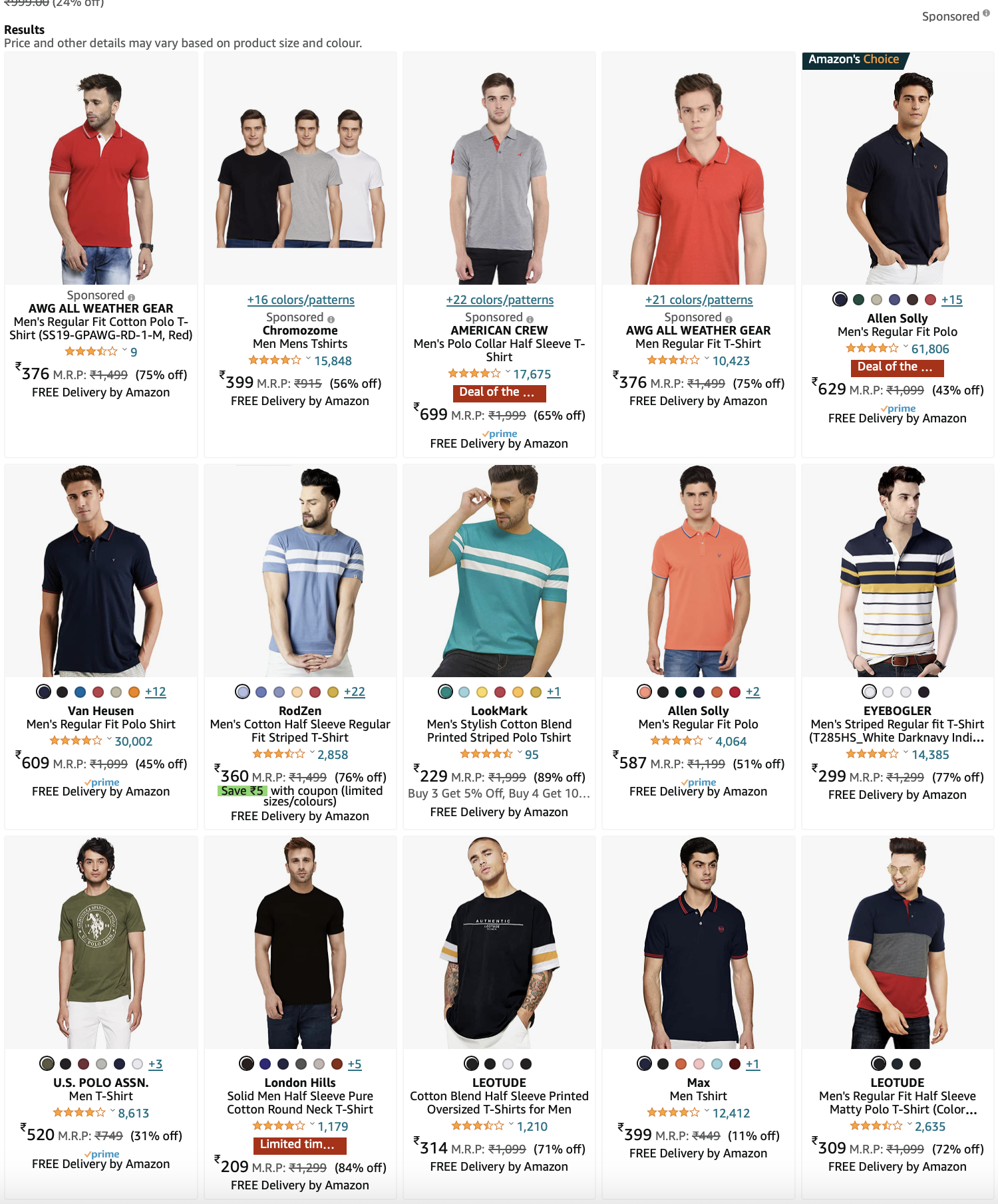}
 	\caption{Sequential Binary Comparisons under the SRH Model. Note: $\mu(\{A,C\}) = 0.4$ represents the probability of choosing $A$ over $C$ in a binary setting.}
 	\label{fig:srh-tree}
 \end{figure}

\section{Literature Review}
This section reviews theoretical and experimental contributions that inform the development of the Sequential Rationality Hypothesis. The literature is organized into two subsections: theoretical foundations and experimental insights. The motivation stems from the influence of choice architecture in digital markets, particularly where recommender systems shape consumer decision-making under cognitive limitations.\\
Recommender systems (RS), central to computational and economic decision-making environments, are designed to aid users in making decisions in information-rich settings \cite{jannach2010recommender}. A specific class, Economic Recommender Systems (ECRS), incorporates pricing and profit optimization into recommendation mechanisms \cite{de2024economic}, effectively transforming recommendations into a market-influencing economic tool. The central concern addressed in this paper is not the algorithmic design of RS per se, but how recommendations should be presented to consumers—sequentially (pairwise) or simultaneously—to optimize decision quality under bounded rationality.
\subsection{Theoretical Research}
Prior works, such as Simon \cite{simon1955behavioral}, Kahneman \cite{kahneman2003maps}, and Gabaix \cite{gabaix2014sparsity}, recognize bounded rationality and sparse information processing. Meanwhile, Tversky and Simonson \cite{tversky1993context} demonstrate that context and local comparisons can lead to violations of value maximization.
However, these models do not formally capture how attention-driven, sequential binary comparisons evolve or how preference reversals emerge endogenously from structural decision trees; the sequential rational hypothesis model fills this gap by introducing a rational but non-simultaneous decision logic under cognitive constraints.\\
Choice involves two types of uncertainty: uncertainty about future consequences of current actions and uncertainty about future preferences regarding those consequences \cite{march1978bounded, savage1972foundations}. Our work is concerned with the second uncertainty. If the list of alternatives is offered sequentially, pairwise, then this kind of uncertainty can be minimized.   In reference to the model of choice from lists, it can be said that an agent does not have a comprehensive set of elements in front; instead, the elements come to the mind in some sequential manner \cite{rubinstein2006model}. The introduction builds directly on Rubinstein and Salant's \cite{rubinstein2006model} foundational model of choice from lists by extending the core idea that the order of presentation influences decision-making. While their paper characterizes how list-based presentation affects rationality under axioms like Partition Independence, the current paper proposes the Sequential Rationality Hypothesis, offering a formal and experimental framework to test whether pairwise sequential offerings enhance utility-maximizing behavior by mitigating bounded attention constraints. While Rubinstein and Salant provide an axiomatic and representational framework showing that choice behavior can be influenced by the structure of lists (violating standard rationality axioms), they stop short of modeling dynamic or sequential attention processes. The present introduction addresses this gap by introducing the Sequential Rationality Hypothesis and proposing a formal model and experimental design to test how reducing cognitive load through step-wise comparisons may improve decision outcomes. Thus, the gap lies in translating the theoretical insights on list-based irrationality into experimentally testable predictions about attention-aware sequential choice. Revealed Preference (RP) theory assumes that the decision-maker selects the best available option after thoroughly considering all possible alternatives. A set of empirical research shows that RP theory is not always compatible with observed choices; instead, evidence suggests that the decision maker's choice is limited by their individual attention span \cite{hauser1990evaluation, goeree2008limited, van2010retrieving, honka2017advertising, reutskaja2011search, masatlioglu2012revealed}. Classical revealed preference theory is based on full attention, but the alternative theory, called the limited attention model, assumes that decision makers are assumed to
select the best available option from a subset of all possible alternatives, known as the consideration set \cite{masatlioglu2012revealed, lleras2017more, dean2017limited,masatlioglu2013choice}. 
Classical models such as McFadden's \cite{tversky1972elimination} random utility framework and subsequent revealed preference approaches (e.g., Manzini \& Mariotti, \cite{iyengar2000choice,caplin2015revealed}) assume that choice is made from full menus with simultaneous visibility of options, often using rational inattention or stochastic preference mechanisms.
The sequential rationality hypothesis model identifies a gap by proposing that agents use attention-constrained binary comparisons in a recursive sequence, where decision paths—not just outcomes—are shaped by cognitive effort and subset exposure, a mechanism unaccounted for in simultaneous or menu-dependent models.
The decision maker pays attention to the same subset of options every time she is confronted with the same set of available alternatives. A Random Attention Model (RAM) has been proposed that abstains from any specific parametric (stochastic) consideration set formation, and instead considers
a large class of nonparametric random attention rules \cite{cattaneo2020random}. While the paper \cite{masatlioglu2021decision} offers an insightful exploration of consumer decision-making within product networks—where each product (e.g., a movie) is a node and recommendations form links—it adopts a model where consumers traverse a graph based on exposure paths determined by these links. Critically, the study assumes that decision-makers hold stable and unaffected preferences and acknowledges that sparse network structures may prevent complete discovery of alternatives. However, it does not account for how bounded attention interacts with such structures or how consumers may sequentially compare only visible pairs due to cognitive constraints. In contrast, the proposed Sequential Rational Hypothesis explicitly models the decision process as a recursive series of binary comparisons, where attention is stochastically allocated over subsets, leading to path-dependent and partially informed outcomes. This framework not only explains how attention constraints distort the final choice but also allows for empirical falsifiability and welfare comparisons between sequential and simultaneous exposure formats. Hence, this paper addresses a critical gap in understanding how cognitive constraints and sequential attention influence choice in sparse, recommendation-driven environments, advancing beyond network traversal to model the bounded decision logic of the consumer structurally. The status quo bias, i.e., the tendency to choose the status quo alternative, is more prevalent in larger choice sets \cite{iyengar2000choice}. This has been demonstrated in another paper, which utilized the decision maker's attention on that alternative. This assumption naturally generates the prediction
that status quo bias will be more prevalent in larger choice sets in which attention is relatively more scarce \cite{masatlioglu2005rational,dean2017limited}. There is well-established evidence that decision makers consistently fail to consider
all available options. Instead, they restrict their attention to only a subset of alternatives and then undertake a more detailed analysis of this reduced set \cite{lleras2017more}. The decision-maker (DM) pays attention to the top N elements according to some ranking. Despite its simplicity and intuitive appeal, that the decision maker always selects the option that is highest in that order, experimental evidence indicates that this principle is often violated \cite{huber1982adding,simonson1992choice}. An article also finds that  the decision is done based on a context-dependent model of choice \cite{tversky1993context}. This means the decision maker rather than trying to find the maximal available alternative, tries only to find a satisfactory alternative, one that is ‘good enough \cite{caplin2011search,caplin2011search}. However in a research article related to the examination of how the timing of investment affects the levels of quality chosen by firms finds that aggregate profit is higher under sequential quality choice, suggesting that sequential quality choice offers coordination benefits, an insight often overlooked in quality choice models \cite{aoki1997sequential}. This is supported by the optimal search model proposed by \cite{dc603527-709e-3110-a352-6b21b157c70c} and further extended in \cite{ursu2025sequential} and concluded that the consumer searches sequentially and her goal is to maximize her expected utility net of total costs. On an average, decision maker stop searching too early and the tendency to search too little can be (partly) explained by learning processes \cite{SONNEMANS1998309,10.1093/restud/rdy023}. 
\subsection{Experimental Literature}
Sequential rationality builds on the foundational frameworks of economic rationality by imposing rationality at every stage of decision-making, not merely in outcomes. In classic decision theory, as formalized by \cite{savage1972foundations} and further operationalized by \cite{kreps1982sequential}, a strategy is considered sequentially rational if it maximizes the expected utility given every possible belief at each decision node. This principle has been widely applied in extensive-form games, where agents must optimize not only globally but also conditionally at every subgame.
Herbert Simon’s theory of bounded rationality \cite{simon1955behavioral} introduced cognitive constraints that limit agents’ abilities to pursue ideal optimization, prompting subsequent models, such as Prospect Theory \cite{kahneman1979d}, which integrate psychological biases into choice behavior. Kahneman \cite{kahneman2003maps} and Tversky \cite{tversky1993context} further demonstrated that context, heuristics, and framing effects systematically distort rational evaluation in sequential decision settings. Gabaix \cite{gabaix2014sparsity} formalized these distortions via sparsity-based models, while \cite{bordalo2013salience} incorporated salience effects where attention distorts evaluation order.\\
Chung and Ely \cite{chung2007foundations} revisited sequential rationality under dominant strategy mechanisms, arguing that even dominant strategies can be systematically misapplied when agents harbor bounded forward-looking beliefs. Bustow and Townsend \cite{busemeyer1993decision} proposed Decision Field Theory as a dynamic cognitive alternative, showing that beliefs and preferences evolve during the decision process, often deviating from the consistency assumed in sequential rationality. Ariely’s \cite{ariely2008predictably} experiments illustrated predictably irrational behaviors in sequential contexts like sunk cost fallacy and decoy effects, indicating a systematic violation of subgame perfection. \\
In support of the proposed sequential rational hypothesis a strong argument is that  the decision maker uses a costly comparison decision aid such as price and quality comparison websites and apps \cite{https://doi.org/10.1002/bdm.2181}. Nowadays, many consumer choice decisions can be characterized as “easy-click” checking decisions: situations in which consumers can use a costly comparison decision aid to guide their choice or even decide for them.This supports that the platform is generate the list of products in such a way so that it would be very difficult for the customer to compare. As a result they pay extra to avail the aid service.
\\ Now consider a set of related literature on some sub-themes. 
\subsection{Bounded rationality \& limited attention}
In many situations where agents make decisions under uncertainty, information acquisition is costly (involving pecuniary, time, or psychological costs); therefore, agents may rationally choose to remain imperfectly informed about the available options. This idea underlies the theory of rational inattention, a concept of significant importance in decision-making under uncertainty. It has become a key paradigm for modeling bounded rational behavior \cite{sims2003implications,fosgerau2020discrete}. A systematic literature has been studied in \cite{https://doi.org/10.1111/ijcs.12970}. 
\subsection{Limited consideration \& random attention}
Distributions of consideration set sizes are an important consideration in the final decision \cite{hauser1990evaluation}. The consideration set of a rational consumer will represent trade-offs between decision costs and the incremental benefits of choosing from a larger set of brands. If evaluating a brand reduces biases and uncertainty in perceived utility, the decision to evaluate a brand for inclusion in a consideration set differs from the decision to consider an evaluated brand. The decision to include a brand after evaluation entails a trade-off between the incremental benefits expected at each consumption occasion and the expected incremental decision costs associated with it. The decision to add to the consideration set is a sequential sampling strategy. Brands are evaluated in a specific order determined by a pre-evaluation. Consideration sets will be smaller for larger decision costs. Decision costs are additive. That is, the cost of evaluating $n$ brands is the sum of the decision costs for each brand. The proposed model in \cite{krajbich2011multialternative} supports this, assuming that the brain computes a relative decision value signal that evolves
over time as a Markov Gaussian process until a choice is made.
\subsection{Sequential vs simultaneous evaluation}
The typical consumer is also time-constrained and cannot afford to spend much time making each selection. To solve this decision problem, consumers need to perform a dynamic search over the set of feasible items under conditions of extreme time pressure and choice overload. The experimental results indicate that subjects can accelerate the search process when presented with a large number of items. The average number of items seen in a trial increases with set size. The initial search process is approximately random with respect to value; it is not the case that items with higher value are more likely to be seen \cite{reutskaja2011search}.
\subsection{Context effects that plague simultaneous menus}
Adding an alternative to a choice set can increase the probability of choosing the item that dominates it \cite{huber1982adding}. Regularity is the minimum condition of most choice models. It says that for any item that is a part of set A, where A is, in turn, a subset of B, the probability of choosing X from A must not be less than that of choosing X from B. If this inequality is satisfied, one cannot increase the probability of choosing an item by adding other items. This concept is employed in the present article to formalize the monotonicity axiom. The zero-comparison effect \cite{10.1086/657998} refers to a surprising finding that can deepen your understanding of consumer choices, showing how an option's attractiveness can be influenced by increasing an undesirable attribute from zero or a desirable one, even if it makes the option objectively less appealing. Recognizing this effect can inspire researchers and professionals to better interpret consumer decision-making processes.
Studies reveal that the zero-comparison effect only occurs with two options; when more options are available, consumers can make more meaningful attribute trade-offs, reducing reliance on relative perceptions and diminishing the effect \cite{https://doi.org/10.1002/cb.1889}. This limitation invites further exploration and understanding of consumer decision contexts.
\subsection{Choice overload \& assortment complexity}
Hick’s Law (or the Hick-Hyman Law) states that the more choices a person is presented with, the longer it will take the person to reach a decision \footnote{\url{https://www.interaction-design.org/literature/topics/hick-s-law?srsltid=AfmBOorpWedcNeBszR1gC3YNNhqxXp9W4Kaf5-pf569Ntam1GnZUlPE1}
}. On the other hand, having more choices is necessarily more intrinsically motivating than having fewer \cite{iyengar2000choice}. These two statements suggest that for a large set, the choice process should be sequential. A study
reveals that information overload negatively impacts perceived value and perceived effectiveness but positively impacts perceived risk \cite{https://doi.org/10.1002/cb.2098}. A paper  argues that the number of options under each label is more important in reducing choice overload than the number of labels. Authors call the number of options under each label “category ratio". Category ratio is represented as ()assortment size/category labels). The paper finds that a few labels are beneficial only when the category ratio is within the proposed optimal range. Uninformative labels also reduced choice overload when categorized using the optimal category ratio \cite{https://doi.org/10.1002/cb.2180}. A detailed systemic literature review has been done in. The paper finds that restricting options may no longer be an effective solution in this age of digitization and online purchasing \cite{https://doi.org/10.1111/ijcs.13029, https://doi.org/10.1111/ijcs.12899}. This supports our hypothesis that instead of restricting the options the options can be provided sequentially. 
\subsection{Applied Marketing Literature Supporting Sequential Rationality Hypothesis}
Due to limited processing capacity, consumers often lack well-defined existing preferences but construct them using various strategies contingent upon task demands. Consumer decision-making is dependent on constructive processing, which generally implies contingent choices \cite{JamesR}. This supports the sequential choice process. It has been proposed that, when choosing from a larger assortment, consumers with an available ideal point are likely to have stronger preferences for the chosen option than consumers without an available ideal attribute combination \cite{10.1086/376808}. Finding an ideal point is very difficult from a large assortment. Pairwise choice reduces this difficulty. The information economics literature suggests that search costs in electronic markets have essentially been reduced to zero, as consumers can use powerful search tools, free of charge, to easily find and compare product and shopping information on the Internet. In the present research, however, we propose a research model that suggests users must expend time and effort when completing search tasks, resulting in significant reduction of search costs and a trade-off between search cost and search performance \cite{8ec74f8b-6c4a-3cf7-8f86-a57f4d5c62ec}. This also supports that the rational choice is costly if offered a large assortment. When a consumer repeatedly purchases a product, they have become focused on that product \cite{moe2003buying}. The presence of search or evaluation costs may lead consumers not to search and not to choose if too
many or too few alternatives are offered \cite{3c7e062e-100d-3415-9cc5-a78e76a77992}. Some literature suggests that too many options can be detrimental to choice, creating a situation of choice overload and information overload \cite{CHERNEV2015333}. Research demonstrates that even when consumers make a purchase, the same item may generate lower satisfaction when chosen from a larger assortment, as opposed to a smaller one \cite{doi:10.1509/jmkr.47.2.312}. This suggests that the balanced approach is to offer pairwise. 
\subsection{Research Gap:} Despite the progress in modeling bounded rationality and attention, existing frameworks do not integrate sequential rationality with cognitively viable decision paths. The Sequential Rationality Hypothesis fills this gap by preserving game-theoretic consistency while embedding preference filtering and attention constraints into recursive decision structures. Unlike purely stochastic or network traversal models, SRH allows for falsifiable empirical tests and welfare comparisons in both digital and traditional choice environments.\\
The core contribution of this paper lies in proposing a new behavioral mechanism—\textit{sequential pairwise offering}—to improve rational choice under limited attention. While several theories in behavioral economics and decision sciences address context dependence, bounded rationality, and cognitive overload, the integration of a sequential decision structure with formal probabilistic choice theory remains relatively unexplored.
\section{Model}
We propose a model of "Sequential Rationality under Bounded Attention" grounded in the Random Attention Model (RAM) framework. Our objective is to formally capture the decision behavior of agents who select from a finite set of alternatives through sequential binary comparisons, rather than simultaneous multi-alternative choice, due to attention constraints. The model builds on a well-defined preference structure, a monotonic stochastic attention rule, and a recursive choice mechanism.
\subsection{Random Attention Theory and Sequential Rationality}
The sequential rationality framework presented earlier can be more rigorously grounded using the formalism of the \textit{Random Attention Model} (RAM) as introduced by \cite{cattaneo2020random}. RAM posits that consumers form \textit{stochastic consideration sets} from the set of available alternatives, and make utility-maximizing choices over the attended subset. This approach relaxes the classical assumption of full attention and accommodates empirically observed violations of regularity.\\
Let the decision maker (DM) have a strict preference ordering $\succ$ over the finite set of alternatives $X$, and is represented by a utility function $u:X \rightarrow \mathbb{R}$ such that $x \succ y$ if and only if $u(x) \ge u(y)$. It is assumed that the preference ordering is transitive and complete.\\
In the grand choice set $X$, each element represents a distinct product. A \textit{choice rule} is defined as a map $\pi: \mathcal{X} \times \mathcal{X} \to [0,1]$ such that $\pi(x\mid S)$ denotes the probability that product $x$ is chosen from set $S \subseteq X$, and $\sum_{x \in S} \pi(x\mid S) = 1$ for all non-empty $S$.\\
The stochastic choice behavior under RAM arises from a latent attention rule $\mu: \mathcal{X} \times \mathcal{X} \to [0,1]$, where $\mu(T\mid S)$ is the probability that the consideration set $T \subseteq S$ is formed when the choice set $S$ is presented. An important restriction imposed by RAM is monotonic attention as defined below. The following definitions have been rebuilt from (\cite{cattaneo2020random}).
\begin{definition}[Choice Rule]\label{1}
	Under the 	\textit{Choice rule} as defined as a map $\pi: \mathcal{X} \times \mathcal{X} \to [0,1]$ such that $\pi(x\mid S)$ denotes the probability that product $x$ is chosen from set $S \subseteq X$ with some positive probability in between $0 < \pi < 1$, and $\sum_{x \in S} \pi(x\mid S) = 1$ for all non-empty $S$. However, if the decision maker is selecting nothing then $\pi = 0$.
\end{definition}
\begin{definition}[Latent Attention Rule]\label{2}
	A latent attention rule $\mu: \mathcal{X} \times \mathcal{X} \to [0,1]$, where $\mu(T\mid S)$ is the probability that consideration set $T \subseteq S$ is formed when choice set $S$ is presented. 
\end{definition}
\begin{definition}[Monotonic Attention]\label{3} 
	Given the \textit{Choice Rule \& Latent Attention Rule} the monotone attention suggests
\begin{equation*}
\mu(T\mid S) \leq \mu(T\mid S \backslash  \{a\}) \quad \text{for all } a \in S \backslash T.
\end{equation*}
\end{definition}
This condition in Definition \ref{3}  ensures that removing a non-considered alternative from $S$ weakly increases the attention probability on $T$. This monotonic attention property is the key to supporting the sequential rationality and has been developed in the latter part of the article.
\begin{definition}[Cognitive Costs]
	The monotone attention definition suggests that when a decision maker selects a subset from an array of objects, cognitive costs play a role. When the subset $T$ is selected from $S$, it is expected that the cognitive cost will be less when selecting from $(T \ mid S \backslash \{a\})$. The rationality expects some amount of costs to be incurred. The platform, as a key player, can utilize this, for example, to design the market in a way that allows consumers to face lower cognitive costs. However, it may lead to irrational decisions when selecting an object. 
\end{definition}
\begin{definition}[Revised Choice Rule Under Monotone Attention]
	An agent with strict preference $\succ$ chooses the most preferred element out of any consideration set under potential consideration and hence realized/shortlist. Therefore,
	\begin{equation}
		\pi(x\mid S) = \sum_{T\subseteq S} \mathbf{1}\{x = \max_{\succ} T\} \cdot \mu(T\mid S),
	\end{equation}
	where $\mathbf{1}\{\cdot\}$ is the indicator function and $\max_{\succ}T$ is the $\succ$-maximal element of $T$. 
\end{definition}
\subsection{Attention Structure}
\subsubsection{Preliminaries}
Let $X$ be a finite set of alternatives and $\succ$ a strict preference relation on $X$. Let $\mathcal{C}$ denote the space of choice rules $C: \mathcal{P}(X) \rightarrow \Delta(X)$, where $\mathcal{P}(X)$ is the power set of $X$ excluding the empty set, and $\Delta(X)$ is the set of probability distributions over $X$.\\
Let $C^{\text{SEQ}}$ denote a choice function based on a binary tree structure: for $S \subseteq X$, $\mid S \mid> 2$, define the outcome recursively as:
\[
C^{\text{SEQ}}(S) = C(x, C^{\text{SEQ}}(S \setminus \{x\}))
\]
for any $x \in S$ (tie-breaking is notational and does not affect probabilistic RAM).\\
In line with RAM, the DM does not simultaneously evaluate all elements in a choice set $S \subseteq X$. Instead, the DM draws a subset $T \subseteq S$ according to an attention rule $\mu(T \mid S)$, representing the probability of considering $T$ when faced with $T$. We assume:
\begin{itemize}
	\item \textbf{Monotonicity:} For any $T\subseteq S \subseteq X$, and $a \in S \setminus T$, it holds that $\mu(T\mid S) \leq \mu(T\mid S \setminus \{a\})$.
	\item \textbf{Non-degeneracy:} For all nonempty $S$, $\sum_{T \subseteq S, T \neq \emptyset} \mu(T\mid S) = 1$.
\end{itemize}
The DM chooses the most preferred element from the attended set: $C(T) = \arg\max_{x \in T} u(x)$.
\subsection{Sequential Choice Process}
Given $S=\{x_1,x_2,\dots,x_n\}$ and a binary choice rule $C:\mathcal{X}\times\mathcal{X}\to\mathcal{X}$,
define $C^{\mathrm{seq}}$ recursively by
\[
\begin{aligned}
	&\text{Base case:} && C^{\mathrm{SEQ}}(x_i,x_{i+1}) \;=\; C(x_i,x_{i+1}).\\[4pt]
	&\text{Recursion:} && C^{\mathrm{SEQ}}(x_1,\dots,x_n)
	\;=\; C\!\big(x_1,\; C^{\mathrm{SEQ}}(x_2,\dots,x_n)\big), \quad n\ge 3.
\end{aligned}
\]
Or,
\[
\begin{aligned}
	& C^{\mathrm{SEQ}}(x_i,x_{i+1}) = C(x_i,x_{i+1}),\\
	& C^{\mathrm{SEQ}}(x_1,\dots,x_n)
	= C^{\mathrm{SEQ}}\big(C(x_1,x_2),x_3,\dots,x_n\big), \quad n\ge 3.
\end{aligned}
\]

This process forms a binary tree, and its output is a stochastic outcome driven by RAM's attention constraints applied over pairwise comparisons. This has been explained graphically in Figure \ref{fig:SCM} where the model has been compared with the simultaneous RAM as well.
\subsubsection{Comparison of Sequential and Simultaneous Choice Models}
\begin{figure}[H]
	\centering
	
	% Subfigure 1: Sequential Tree
	\begin{subfigure}[t]{0.45\textwidth}
		\centering
		\begin{tikzpicture}[
			sibling distance=25mm,
			level distance=15mm,
			every node/.style = {draw, rectangle, rounded corners,
				align=center, top color=white, bottom color=blue!10}
			]
			\node {Compare A vs Winner of (B vs C)}
			child {node {A}}
			child {
				node {Compare B vs C}
				child {node {B}}
				child {node {C}}
			};
		\end{tikzpicture}
		\caption{Sequential Choice Tree\\[1ex]
			$C^{\text{SEQ}}(A,B,C) = C(A, C(B,C))$\\[1ex]
			\textbf{Interpretation:} The DM first compares B vs C (C wins with probability greater than 0.50), then compares A vs C. Given A’s higher utility, A is more likely to win the final round. This structure emphasizes path dependence and simplifies cognitive load.}
				\label{fig:SCM}
	\end{subfigure}
	\hfill
	
	% Subfigure 2: Simultaneous Choice
	\begin{subfigure}[t]{1\textwidth}
		\centering
		\begin{tikzpicture}[
			node distance=4.0cm and 5.0cm, % Large horizontal & vertical gaps
			every node/.style = {draw, circle, minimum size=14mm, font=\normalsize},
			every path/.style = {->, thick}
			]
			% Top node
			\node (S) at (0,8) {$\{A,B,C\}$};
			
			% Middle layer (size-2 subsets) - spaced widely
			\node (T1) [below left=of S, xshift=-2cm, yshift=-0.5cm] {$\{A,B\}$};
			\node (T2) [below=of S, yshift=-0.5cm] {$\{A,C\}$};
			\node (T3) [below right=of S, xshift=2cm, yshift=-0.5cm] {$\{B,C\}$};
			
			% Bottom layer (singletons) - spaced very widely
			\node (T4) [below=5.0cm of S, xshift=-6cm] {$\{A\}$};
			\node (T5) [below=5.0cm of S] {$\{B\}$};
			\node (T6) [below=5.0cm of S, xshift=6cm] {$\{C\}$};
			
			% Edges from S to subsets
			\draw (S) -- (T1) node[midway, left=6pt] {\small $\mu(\{A,B\})$};
			\draw (S) -- (T2) node[midway, right=6pt] {\small $\mu(\{A,C\})$};
			\draw (S) -- (T3) node[midway, right=6pt] {\small $\mu(\{B,C\})$};
			
			% Edges to singletons (dashed)
			\draw[dashed] (S) -- (T4) node[midway, left=6pt] {\small $\mu(\{A\})$};
			\draw[dashed] (S) -- (T5) node[midway, right=6pt] {\small $\mu(\{B\})$};
			\draw[dashed] (S) -- (T6) node[midway, right=6pt] {\small $\mu(\{C\})$};
		\end{tikzpicture}
		
		\caption{
			Simultaneous RAM Choice:\\
			$\pi(x|\{A,B,C\}) = \sum_{T \subseteq \{A,B,C\}} \mu(T) \cdot \mathbf{1}_{x = \max(T)}$
		}
		\vspace{4pt}
		\small
		\textbf{Interpretation:} The DM may attend to any subset with probability $\mu(T)$. Higher-ranked options are chosen if attended, but the final choice may be suboptimal if top options are excluded. For example, if only $\{B,C\}$ is considered, C is chosen even though A is better.
	\end{subfigure}	
	\caption{Comparison of Sequential and Simultaneous Choice Models}
\end{figure}
\subsection{Axiomatic Characterization of Sequential Rationality under Bounded Attention}

To establish sequential rationality on firm theoretical ground, we now provide an axiomatic characterization. We define a choice rule as sequentially rational if it satisfies a specific set of behavioral axioms under bounded attention. These axioms ensure that the agent consistently applies transitive preferences over binary comparisons, with attention constraints embedded in the structure of choice.
\subsubsection{Axioms}

\begin{enumerate}[label=\textbf{(A\arabic*)}, leftmargin=2em]
	\item \textbf{Binary Consistency:} If $x \succ y$, then $C(\{x,y\}) = x$ with probability at least $p > 0.5$.
	
	\item \textbf{Sequential Transitivity:} For all $x, y, z \in X$, if $C(x,y) = x$ and $C(x,z) = x$, then $C(C(x,y),z) = x$.
	
	\item \textbf{Attention Monotonicity:} For any $T \subseteq S$ such that $x \in T$, the attention weight $\mu(T|S)$ satisfies:
	\[
	\mu(T|S) \leq \mu(T|S \setminus \{a\}) \quad \forall a \in S \setminus T.
	\]
	
	\item \textbf{Cognitive Parsimony:} For any $S, T \subseteq X$, if $|T| < |S|$ and $x \in T \cap S$, then:
	\[
	\pi(x|T) \geq \pi(x|S).
	\]
	This reflects the bounded attention cost of larger menus.
\end{enumerate}
\subsubsection{Bounded Rationality Operator }
We formalize a bounded rationality operator $C_\alpha(S)$ such that:

\begin{equation}
	\Pr(C_\alpha(S) = \arg\max_{x \in S} u(x)) = \Phi(|S|, \alpha)
\end{equation}

where $\alpha \in [0,1]$ represents attention fidelity, and $\Phi$ is a strictly decreasing function of $|S|$ for $\alpha < 1$. The decreasing nature is due to Axioms A3, and A4.
\section{Sequential Choice and Probability of Rational Outcomes}
We hypothesize that sequential choice structures can increase the probability of rational (utility-maximizing) outcomes by reducing cognitive load and improving attention focus. To formalize this, we employ a probabilistic attention model.\\
Let $p \in (0,1)$ denote the probability that a decision-maker (DM) successfully attends to any given item in a comparison. This parameter captures bounded cognitive capacity and is assumed constant per item.We define two choice architectures:
\begin{itemize}
	\item \textbf{Simultaneous Offering:} Consumer selects from $S = \{A, B, D\}$ directly.
	\item \textbf{Sequential Pairwise Offering/Sequential Offering:} Consumer first chooses from $\{A,B\}$, then compares the result with $D$.
\end{itemize}

We analytically compare the resulting choice probabilities under both settings, using the following core probability inequality:

\begin{equation}
	\Pr(C_\alpha(C_\alpha(A,B),D) = \arg\max\{u(A),u(B),u(D)\}) > \Pr(C_\alpha(A,B,D) = \arg\max\{u(A),u(B),u(D)\})
\end{equation}
\subsection{Simultaneous Offering}
Consider a choice set $\{A,B,D\}$ offered simultaneously. The DM must attend to all three items in a single evaluation. The probability of attending to all three is
\begin{equation}
	P_{\mathrm{SIM}} = p^3,
\end{equation}
which declines rapidly with set size when $p<1$, reflecting the cumulative attention burden.
\subsection{Sequential Offering}
Under a sequential architecture, the DM compares items pairwise in two stages:
\begin{enumerate}
	\item Step 1: Compare $\{A,B\}$.
	\item Step 2: Compare the winner of Step 1 with $D$.
\end{enumerate}
At each step, the DM must attend to only two items, with success probability
\begin{equation}
	P_{\mathrm{seq\mbox{-}step}} = p^2.
\end{equation}
Assuming independence across stages, the total probability of attending to all relevant items is
\begin{equation}
	P_{\mathrm{SEQ\mbox{-}total}} = (p^2)^2 = p^4.
\end{equation}
\begin{remark}
	Interpretation:Although $p^4 < p^3$ algebraically, this formulation does not incorporate the empirical advantage of reduced cognitive load in pairwise comparisons. Empirical evidence suggests that $p$ in a binary context is effectively higher than in a ternary context, making sequential architectures more accurate in practice. The advantages arise from:
	\begin{enumerate}
		\item Lower attention demands per decision point.
		\item Increased accuracy in binary comparisons.
		\item Reduced risk of attention misallocation.
	\end{enumerate}
	\end{remark}
	
	\paragraph{Setup.}
	Let $p_k\in(0,1]$ denote the \emph{effective per-item attention} (or reliability) parameter
	when $k$ items must be jointly processed. Empirically, reduced cognitive load in binary
	comparisons implies $p_2>p_3$. Here,$2\& 3$
	represent the number of objects from which the selection is to be made.
	\paragraph{Algebra with fixed $p$.}
	If one imposes a single $p$ for all arities (i.e. number of items), then
	\[
	P_{\mathrm{SIM}} = p^3,
	\qquad
	P_{\mathrm{SEQ}} = (p^2)^2 = p^4,
	\]
	and since $p\in(0,1)$, it follows that $p^4<p^3$, so the simultaneous architecture appears better.
	
	\paragraph{Heterogeneous $p$ by arity.}
	Allow the effective parameter to depend on arity:
	\[
	P_{\mathrm{SIM}} = p_3^{\,3},
	\qquad
	P_{\mathrm{SEQ}} = (p_2^{\,2})^2 = p_2^{\,4}.
	\]
	The sequential architecture is (weakly) more accurate iff
	\[
	p_2^{\,4} \;\ge\; p_3^{\,3}
	\quad\Longleftrightarrow\quad
	p_2 \;\ge\; p_3^{\,3/4}.
	\tag{$\star$}
	\]
	
	\begin{proposition}[Binary advantage threshold]
		If the effective binary parameter satisfies $p_2 \ge p_3^{3/4}$, then
		$P_{\mathrm{SEQ}} \ge P_{\mathrm{SIM}}$. In particular, any modest uplift
		$p_2/p_3 \ge p_3^{-1/4}$ suffices.
	\end{proposition}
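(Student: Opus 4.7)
The plan is to reduce the claim to a one-line algebraic equivalence on positive reals, since the setup has already done the heavy conceptual lifting by giving closed-form expressions $P_{\mathrm{SIM}}=p_3^{3}$ and $P_{\mathrm{SEQ}}=p_2^{4}$ under the heterogeneous-arity convention. First I would fix $p_2,p_3\in(0,1]$ and write out the target inequality $P_{\mathrm{SEQ}}\ge P_{\mathrm{SIM}}$ in the form
\[
p_2^{4}\;\ge\;p_3^{3}.
\]
Both sides are strictly positive, and the map $t\mapsto t^{1/4}$ is strictly increasing on $(0,\infty)$, so applying it preserves the inequality without creating spurious roots. This yields the equivalent condition $p_2\ge p_3^{3/4}$, which is exactly display $(\star)$ in the paper and the hypothesis of the proposition. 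One direction of the proposition (hypothesis implies conclusion) is immediate by reversing the steps: if $p_2\ge p_3^{3/4}$, raising both sides to the fourth power gives $p_2^{4}\ge p_3^{3}$, i.e.\ $P_{\mathrm{SEQ}}\ge P_{\mathrm{SIM}}$.

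For the \emph{in particular} clause, I would rewrite the threshold in ratio form. Dividing through by $p_3$ (which is strictly positive, so admissible) transforms $p_2\ge p_3^{3/4}$ into
\[
\frac{p_2}{p_3}\;\ge\;\frac{p_3^{3/4}}{p_3}\;=\;p_3^{\,3/4-1}\;=\;p_3^{-1/4}.
\]
Since $p_3\in(0,1]$, the quantity $p_3^{-1/4}\ge 1$, so this is a genuine (mild) uplift requirement on the binary-versus-ternary reliability ratio. Conversely, multiplying $p_2/p_3\ge p_3^{-1/4}$ by $p_3>0$ recovers $p_2\ge p_3^{3/4}$, confirming the two formulations are equivalent. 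A brief sanity check is worth noting: when $p_3\to 1$, the threshold collapses to $p_2\ge 1$, consistent with the earlier algebraic observation that under a single $p<1$ the simultaneous format dominates mechanically.

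There is no real obstacle here beyond bookkeeping, but the one subtlety worth flagging explicitly in the write-up is the domain assumption $p_2,p_3\in(0,1]$: this is what legitimizes both taking fourth roots and dividing by $p_3$, and it is also what makes the uplift factor $p_3^{-1/4}$ exceed (or equal) unity, giving the inequality its behavioral content. I would close by remarking that the proposition does not claim the hypothesis is empirically satisfied; it only characterizes the threshold, with the empirical argument for $p_2>p_3$ being deferred to the accompanying remark on reduced cognitive load in pairwise comparisons.
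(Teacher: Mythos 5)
Your proposal is correct and follows essentially the same route as the paper: both reduce the claim to the trivial equivalence $p_2^4 \ge p_3^3 \iff p_2 \ge p_3^{3/4}$ via a monotone transformation (the paper takes logarithms, you take fourth roots, which is the same argument). Your explicit verification of the ratio form $p_2/p_3 \ge p_3^{-1/4}$ and the remark on the domain $p_2,p_3\in(0,1]$ are harmless additions that the paper leaves implicit.
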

	
	\begin{proof}
		Immediate from $P_{\mathrm{SEQ}}=p_2^4$ and $P_{\mathrm{SIM}}=p_3^3$.
		Taking logs preserves the inequality since both sides are positive:
		$4\log p_2 \ge 3\log p_3 \iff p_2 \ge p_3^{3/4}$.
	\end{proof}
	
	\begin{remark}[Interpreting the empirical claim]
		Condition $(\star)$ formalizes: although $p^4<p^3$ for a \emph{common} $p$,
		the empirically larger binary effectiveness ($p_2>p_3$ due to reduced cognitive load)
		can overturn the algebraic comparison. Even a modest increase in $p$ when moving from
		ternary to binary comparisons can make sequential architectures more accurate in practice. This is directly supported  by the axioms of attention monotonicity and cognitive parsimony (A3 and A4). 
	\end{remark}
	
	\paragraph{Illustration.}
	If $p_3=0.70$, then $p_3^{3/4}\approx 0.761$. Any binary effectiveness
	$p_2\ge 0.761$ yields $p_2^4 \ge p_3^3$, i.e., sequential accuracy surpasses simultaneous.
	
\subsection{Theoretical Results}
We now formalize the divergence between sequential and simultaneous outcomes under bounded attention.
\begin{assumption}[Attention and accuracy by arity]\label{assump:attn-acc}
	For each $k\ge 2$, let $\alpha_k\in(0,1]$ denote the per-item attention success probability when a $k$-ary set is evaluated in a single stage. Within a stage, attention to distinct items occurs independently. Let $\beta_k\in(0,1]$ denote the conditional probability that, given attention to all $k$ items, the chosen item is the true utility maximizer among those $k$. Across distinct stages, attention and choice events are independent.
\end{assumption}

\begin{definition}[Architectures and success]\label{def:success}
	Given a $k$-ary stage, define its \emph{stage success} as the event that (i) all required items in that stage are attended and (ii) the stage outputs the true utility maximizer among those items. For a multi-stage architecture, define \emph{overall success} as the event that every stage succeeds and the final output equals the global maximizer in the original choice set.
\end{definition}

\begin{lemma}[Strengthened setup formulas]\label{lem:strengthened}
	Under Assumption~\ref{assump:attn-acc}, for a simultaneous evaluation of three items $\{A,B,D\}$,
	\[
	S_{\mathrm{SIM}} \;=\; \alpha_3^{\,3}\,\beta_3.
	\]
	For a two-stage sequential architecture comparing $\{A,B\}$, then the winner with $D$,
	\[
	S_{\mathrm{SEQ}} \;=\; (\alpha_2^{\,2}\beta_2)^2 \;=\; \alpha_2^{\,4}\,\beta_2^{\,2}.
	\]
\end{lemma}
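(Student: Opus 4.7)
The plan is to derive each formula directly from the definitions by factoring the relevant joint event into independent attention and selection components, with the transitivity of $\succ$ supplying the bridge from per-stage correctness to global correctness in the sequential case. First I would handle $S_{\mathrm{SIM}}$: by Definition~\ref{def:success}, simultaneous success on $\{A,B,D\}$ is the conjunction of (i) attention to all three items and (ii) selection of the true maximizer conditional on full attention. Within-stage independence of attention (Assumption~\ref{assump:attn-acc}) gives the attention factor $\alpha_3^{\,3}$, the conditional accuracy factor is $\beta_3$ by definition, and their product yields $S_{\mathrm{SIM}} = \alpha_3^{\,3}\beta_3$.

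For $S_{\mathrm{SEQ}}$ I would first compute the single-stage success probability of a binary comparison: joint attention to two items occurs with probability $\alpha_2^{\,2}$ and conditional selection of the binary maximizer with probability $\beta_2$, giving $\alpha_2^{\,2}\beta_2$ per stage. Since Assumption~\ref{assump:attn-acc} also posits cross-stage independence of both attention and choice events, the probability that Stage~1 and Stage~2 both succeed factorizes as $(\alpha_2^{\,2}\beta_2)^2 = \alpha_2^{\,4}\beta_2^{\,2}$.

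The one step that requires a substantive argument rather than a direct computation is showing that the event ``both stages succeed'' coincides (in probability) with the overall-success event of Definition~\ref{def:success}, which additionally demands that the final output equal $\max_{\succ}\{A,B,D\}$. Here I would invoke the strict transitivity of the preference order assumed in Section~3.1 together with Axiom~(A2): if Stage~1 produces $w_1 = \max_{\succ}\{A,B\}$ and Stage~2 produces $\max_{\succ}\{w_1,D\}$, transitivity forces this to equal $\max_{\succ}\{A,B,D\}$. Conversely, overall success trivially requires both stages to succeed. Hence the two events coincide and their probability equals $(\alpha_2^{\,2}\beta_2)^2$.

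The main delicacy I anticipate lies in justifying the cross-stage factorization when the identity of the item compared in Stage~2 is itself a random function of Stage~1's outcome. I would resolve this by conditioning on the Stage~1 output: since $\beta_2$ and $\alpha_2$ are defined per binary menu and Assumption~\ref{assump:attn-acc} posits that attention and choice events in Stage~2 are independent of those in Stage~1, the conditional Stage~2 success probability equals $\alpha_2^{\,2}\beta_2$ regardless of which element won Stage~1. Averaging over Stage~1 outcomes then preserves the product form, delivering $S_{\mathrm{SEQ}}=\alpha_2^{\,4}\beta_2^{\,2}$ as claimed.
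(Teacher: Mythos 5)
Your proposal is correct and follows essentially the same route as the paper's own proof: factor each stage's success into an attention component ($\alpha_k^{\,k}$) and a conditional accuracy component ($\beta_k$), then multiply across the two binary stages using the assumed cross-stage independence. The only difference is that you explicitly justify two steps the paper treats as immediate --- identifying ``both stages succeed'' with overall success via transitivity of $\succ$, and conditioning on the Stage~1 winner to legitimize the factorization when the Stage~2 menu is random --- which is a welcome tightening rather than a different argument.
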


\begin{proof}
	For the simultaneous stage on three items, independence within the stage yields
	$\mathbb{P}(\text{attend all three})=\alpha_3^3$. Conditional on full attention, the correct choice occurs with probability $\beta_3$, hence $S_{\mathrm{SIM}}=\alpha_3^3\beta_3$.
	
	In the sequential architecture, each stage is binary. By within-stage independence, a single binary stage succeeds with probability $\alpha_2^2\beta_2$. By stage independence, overall success is the product over the two stages, giving $S_{\mathrm{SEQ}}=(\alpha_2^2\beta_2)^2=\alpha_2^4\beta_2^2$.
\end{proof}

\begin{theorem}[When sequential beats simultaneous]\label{thm:dominance}
	Under Assumption~\ref{assump:attn-acc}, the sequential architecture yields at least as high a probability of correct final choice as the simultaneous architecture if and only if
	\[
	\alpha_2^{\,4}\,\beta_2^{\,2}\ \ge\ \alpha_3^{\,3}\,\beta_3.
	\]
	Equivalently,
	\[
	\frac{\alpha_2}{\alpha_3^{3/4}}\cdot\frac{\beta_2^{1/2}}{\beta_3^{1/2}}\ \ge\ 1.
	\]
\end{theorem}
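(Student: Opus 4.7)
The plan is to derive Theorem \ref{thm:dominance} as an immediate algebraic consequence of Lemma \ref{lem:strengthened}, which already supplies the exact success probabilities $S_{\mathrm{SIM}}=\alpha_3^{3}\beta_3$ and $S_{\mathrm{SEQ}}=\alpha_2^{4}\beta_2^{2}$ under Assumption \ref{assump:attn-acc}. Because these are identities rather than one-sided bounds, the biconditional in the theorem will follow without slack in either direction. I would begin by quoting the two identities to anchor the argument, emphasizing that the within-stage and across-stage independence assumptions have already been discharged inside the lemma, so no further probabilistic work is needed.

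Next I would handle the equivalence $S_{\mathrm{SEQ}}\ge S_{\mathrm{SIM}} \iff \alpha_2^{4}\beta_2^{2}\ge\alpha_3^{3}\beta_3$. Since each side of the desired inequality is, by Lemma \ref{lem:strengthened}, exactly one of the two success probabilities, the biconditional collapses to a direct substitution. I would write this on a single line to make transparent that both the "only if" and the "if" directions are simultaneous restatements of the same identity.

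For the normalized ratio form $\frac{\alpha_2}{\alpha_3^{3/4}}\cdot\frac{\beta_2^{1/2}}{\beta_3^{1/2}}\ge 1$, I would divide both sides of $\alpha_2^{4}\beta_2^{2}\ge\alpha_3^{3}\beta_3$ by $\alpha_3^{3}\beta_3$, which is strictly positive by Assumption \ref{assump:attn-acc}, and then extract an appropriate positive root so that $\alpha_2$ appears to the first power. Since the map $x\mapsto x^{r}$ is strictly increasing on $(0,\infty)$ for every $r>0$, the direction of the inequality is preserved at each step, and the biconditional survives the transformation.

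The main obstacle, such as it is, is purely bookkeeping: I would carefully verify the fractional exponents when rearranging into the normalized ratio so that the resulting expression matches the statement exactly, and I would explicitly note the positivity conditions that justify dividing and taking roots. Beyond this cosmetic check, the proof is essentially one substitution followed by one algebraic rearrangement, so I do not anticipate any genuinely difficult step.
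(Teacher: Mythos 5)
Your overall route is the same as the paper's: quote the exact identities $S_{\mathrm{SIM}}=\alpha_3^{3}\beta_3$ and $S_{\mathrm{SEQ}}=\alpha_2^{4}\beta_2^{2}$ from Lemma~\ref{lem:strengthened}, observe that the biconditional is then a pure substitution, and obtain the normalized form by positivity-preserving power transformations. For the first display of the theorem this is complete and correct, and no further probabilistic content is needed, exactly as you say.

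However, the step in which you promise to ``extract an appropriate positive root so that the resulting expression matches the statement exactly'' cannot be carried out, because the printed ratio form is not in fact equivalent to $\alpha_2^{4}\beta_2^{2}\ge\alpha_3^{3}\beta_3$. Dividing by $\alpha_3^{3}\beta_3>0$ and taking the unique admissible root (the fourth root, so that $\alpha_2$ appears to the first power) yields $\frac{\alpha_2}{\alpha_3^{3/4}}\cdot\frac{\beta_2^{1/2}}{\beta_3^{1/4}}\ \ge\ 1$, with $\beta_3^{1/4}$, not the $\beta_3^{1/2}$ appearing in the theorem. The printed condition $\alpha_2\beta_2^{1/2}\ge \alpha_3^{3/4}\beta_3^{1/2}$ is equivalent, after raising to the fourth power, to $\alpha_2^{4}\beta_2^{2}\ge\alpha_3^{3}\beta_3^{2}$, which (since $\beta_3\le 1$) is a strictly weaker requirement whenever $\beta_3<1$: it is implied by the main inequality but does not imply it. The paper's own proof glosses over this with ``take fourth and second roots respectively,'' which is not a legitimate uniform operation on a product inequality, so the defect lies in the statement rather than in your strategy; still, your plan as written would stall at the exponent check you yourself flag. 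The fix is either to state the corrected equivalent form with $\beta_3^{1/4}$, or to weaken the claim for the printed form from ``equivalently'' to ``in particular this implies.''
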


\begin{proof}
	By Lemma~\ref{lem:strengthened}, $S_{\mathrm{SEQ}}\ge S_{\mathrm{SIM}}$ is equivalent to
	$\alpha_2^{4}\beta_2^{2}\ge \alpha_3^{3}\beta_3$. Since all terms are positive, take fourth and second roots respectively to obtain the multiplicative form. The transformations are order-preserving, so the equivalence is exact.
\end{proof}

\begin{corollary}[Interpretable sufficient conditions]\label{cor:sufficient}
	Each of the following implies $S_{\mathrm{SIm}}\ge S_{\mathrm{sim}}$.
	\begin{enumerate}
		\item \textbf{Pure attention advantage:} If $\beta_2=\beta_3$ and $\alpha_2\ge \alpha_3^{3/4}$,then sequential weakly dominates. 
		\item \textbf{Pure accuracy advantage:} If $\alpha_2=\alpha_3$ and $\beta_2\ge \beta_3^{1/2}$,then sequential weakly dominates.
		\item \textbf{Monotone-limited attention plus mild accuracy lift:} If $\alpha_2\ge \alpha_3$ and $\beta_2\ge \beta_3^{1/2}$. If attention deteriorates with set size so that $\alpha_{2} \ge \alpha_{3}$, 
		then a modest accuracy lift satisfying $\beta_{2} \ge \beta_{3}^{\,1/2}$ 
		guarantees sequential dominance.
	\end{enumerate}
\end{corollary}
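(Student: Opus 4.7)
My plan is to treat each of the three sufficient conditions as a direct algebraic consequence of the master inequality $\alpha_2^{4}\beta_2^{2}\ge\alpha_3^{3}\beta_3$ established in Theorem~\ref{thm:dominance}. Because that theorem is an equivalence, any sufficient condition must factor through it, so the corollary amounts to a catalogue of three regimes in which the inequality collapses onto a single cleanly interpretable hypothesis.

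I would organize the proof as three parallel substitutions. For case (1), $\beta_2=\beta_3$ cancels the accuracy contribution (up to the shared factor $\beta_3$), and the hypothesis $\alpha_2\ge\alpha_3^{3/4}$, raised to the fourth power, supplies $\alpha_2^{4}\ge\alpha_3^{3}$, which is the attention bound needed. For case (2), $\alpha_2=\alpha_3$ cancels the attention wedge, leaving the accuracy comparison $\beta_2^{2}\ge\beta_3$, which is precisely the square of the hypothesis $\beta_2\ge\beta_3^{1/2}$. For case (3) I would combine the two arguments: $\alpha_2\ge\alpha_3$ delivers $\alpha_2^{4}\ge\alpha_3^{4}$, and $\beta_2\ge\beta_3^{1/2}$ delivers $\beta_2^{2}\ge\beta_3$, so multiplying the two bounds yields $\alpha_2^{4}\beta_2^{2}\ge\alpha_3^{4}\beta_3$, and the master inequality then follows in the paper's near-saturated reliability regime, where $\alpha_3$ and $\beta_3$ are close to one.

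The main obstacle I foresee is bookkeeping rather than conceptual: on $(0,1]$ the power map $t\mapsto t^{r}$ is \emph{decreasing} in $r$, so several natural-looking rearrangements cut against the desired inequality. Specifically, the extra factor of $\alpha_3$ that appears on the sequential side (because of the four-versus-three exponent pattern) cannot be absorbed by the stated hypotheses unless one interprets $\alpha_k,\beta_k$ as close to one, or reads the three cases as clean corner conditions of the full equivalence rather than as tight necessary-and-sufficient statements. I would therefore present each case as an explicit one-line substitution into the master inequality of Theorem~\ref{thm:dominance} and cite the bounded-reliability convention $\alpha_k,\beta_k\in(0,1]$ exactly once per case, to keep the power comparisons transparent and isolate any sign check to a single visible line.
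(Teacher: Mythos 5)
Your route is the same as the paper's: read each case as a one-line substitution into the equivalence $\alpha_2^{4}\beta_2^{2}\ge\alpha_3^{3}\beta_3$ of Theorem~\ref{thm:dominance}. But the obstacle you flag in your final paragraph is not mere bookkeeping --- it is a genuine gap, and your proposed patch does not close it. With $\beta_2=\beta_3=\beta$ the master condition reduces to $\alpha_2^{4}\beta\ge\alpha_3^{3}$, not to $\alpha_2^{4}\ge\alpha_3^{3}$; with $\alpha_2=\alpha_3=\alpha$ it reduces to $\alpha\,\beta_2^{2}\ge\beta_3$, not to $\beta_2^{2}\ge\beta_3$ (so your claim that the attention wedge ``cancels'' in case (2) is incorrect: a factor $\alpha\le 1$ survives on the sequential side and works against you); and in case (3) your product bound only gives $\alpha_2^{4}\beta_2^{2}\ge\alpha_3^{4}\beta_3$, which is weaker than what is needed since $\alpha_3^{4}\le\alpha_3^{3}$. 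Concretely, take $\alpha_2=\alpha_3=0.6$, $\beta_3=0.8$, $\beta_2=\sqrt{0.8}$: the hypotheses of case (2) hold, yet $S_{\mathrm{SEQ}}=0.6^{4}\cdot 0.8\approx 0.104<0.173\approx 0.6^{3}\cdot 0.8=S_{\mathrm{SIM}}$; analogous examples defeat cases (1) and (3). Your repair --- invoking a ``near-saturated reliability regime'' with $\alpha_3,\beta_3$ close to one --- is an assumption found neither in the corollary nor in Theorem~\ref{thm:dominance}, so as written the proposal proves only a weaker claim under an extra hypothesis, not the statement itself.

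For comparison, the paper's own proof is the identical substitution (it restates the three items and appeals to ``monotonicity of powers on $(0,1]$''), and it therefore suffers from exactly the defect you diagnosed: on $(0,1]$ the leftover factors $\beta_3$ (case 1) and $\alpha_3$ (cases 2--3) cut against the desired inequality rather than for it, so the corollary's hypotheses are not in fact sufficient as stated. The honest fix is to strengthen the hypotheses so that the substitution genuinely closes --- for instance $\alpha_2\ge(\alpha_3^{3}/\beta_3)^{1/4}$ with $\beta_2=\beta_3$ in case (1), $\beta_2\ge(\beta_3/\alpha_3)^{1/2}$ with $\alpha_2=\alpha_3$ in case (2), and the pair $\alpha_2\ge\alpha_3^{3/4}$, $\beta_2\ge\beta_3^{1/2}$ in case (3), each of which plugs directly into Theorem~\ref{thm:dominance} --- rather than to import near-saturation of $\alpha_3$ and $\beta_3$ as an unstated side condition.
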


\begin{proof}
	Each item substitutes the stated equalities/inequalities into the condition of Theorem~\ref{thm:dominance} and uses the monotonicity of powers on $(0,1]$.
		\begin{enumerate}
		\item \textbf{Pure attention advantage:} If $\beta_2=\beta_3$ and $\alpha_2\ge \alpha_3^{3/4}$,then sequential weakly dominates. These are due to attention monotonicity and cognitive parsimony axioms.
		\item \textbf{Pure accuracy advantage:} If $\alpha_2=\alpha_3$ and $\beta_2\ge \beta_3^{1/2}$,then sequential weakly dominates. These are due to binary consistency and sequential transitivity axioms.
		\item \textbf{Monotone-limited attention plus mild accuracy lift:} If $\alpha_2\ge \alpha_3$ and $\beta_2\ge \beta_3^{1/2}$. If attention deteriorates with set size so that $\alpha_{2} \ge \alpha_{3}$, 
		then a modest accuracy lift satisfying $\beta_{2} \ge \beta_{3}^{\,1/2}$ 
		guarantees sequential dominance. These are due to dominance of attention monotonicity axiom.
	\end{enumerate}
\end{proof}

\begin{theorem}[Optional generalization: $n$ items, binary tournament]\label{thm:general-n}
	Consider $n\ge 3$ items. A simultaneous single-stage evaluation has success probability
	\[
	S_{\mathrm{SIM}}(n) \;=\; \alpha_n^{\,n}\,\beta_n.
	\]
	A binary tournament with $n-1$ pairwise stages has success probability
	\[
	S_{\mathrm{SEQ}}(n) \;=\; (\alpha_2^{\,2}\beta_2)^{\,n-1}\;=\;\alpha_2^{\,2(n-1)}\,\beta_2^{\,n-1}.
	\]
	Thus the binary tournament weakly dominates the simultaneous stage if and only if
	\[
	\alpha_2^{\,2(n-1)}\,\beta_2^{\,n-1}\ \ge\ \alpha_n^{\,n}\,\beta_n.
	\]
\end{theorem}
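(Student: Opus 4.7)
The plan is to proceed in three movements that parallel the structure already used in Lemma~\ref{lem:strengthened} and Theorem~\ref{thm:dominance}, lifting that $n=3$ argument to arbitrary $n\ge 3$. First I would establish $S_{\mathrm{SIM}}(n)$: under Assumption~\ref{assump:attn-acc}, within the single $n$-ary stage, attention to the $n$ distinct items occurs independently, so the probability of attending to all of them is $\alpha_n^{\,n}$; conditional on full attention, Assumption~\ref{assump:attn-acc} gives that the chosen item equals the true utility maximizer with probability $\beta_n$. Multiplying yields $S_{\mathrm{SIM}}(n)=\alpha_n^{\,n}\beta_n$, which is the direct $n$-ary analog of Lemma~\ref{lem:strengthened}.

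Next I would handle $S_{\mathrm{SEQ}}(n)$. The key combinatorial observation is that any binary tournament (whether bracket-style or the left-associated recursion $C^{\mathrm{SEQ}}$ of the model) on $n$ items consists of exactly $n-1$ pairwise stages, because every stage eliminates precisely one alternative and $n-1$ eliminations are required to determine a sole winner. Each stage is binary, so by within-stage independence its stage success probability (attention to both items \emph{and} correct pairwise choice) equals $\alpha_2^{\,2}\beta_2$. Invoking the cross-stage independence clause of Assumption~\ref{assump:attn-acc}, the probability that every stage succeeds is the product $(\alpha_2^{\,2}\beta_2)^{n-1}=\alpha_2^{\,2(n-1)}\beta_2^{\,n-1}$. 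To identify this product with overall success as in Definition~\ref{def:success}, I would invoke transitivity of $\succ$: if each binary stage correctly returns the $\succ$-max of its two inputs, then by induction on the stage index the final output is the $\succ$-max of the original set, and conversely Definition~\ref{def:success} bundles stage-wise correctness into overall success. Hence $S_{\mathrm{SEQ}}(n)=\alpha_2^{\,2(n-1)}\beta_2^{\,n-1}$.

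The dominance equivalence is then immediate: $S_{\mathrm{SEQ}}(n)\ge S_{\mathrm{SIM}}(n)$ iff $\alpha_2^{\,2(n-1)}\beta_2^{\,n-1}\ge \alpha_n^{\,n}\beta_n$, since both sides are strictly positive and no transformation is needed. If desired, one can take logs to obtain a linear-in-$n$ threshold on $\log\alpha_2$ and $\log\beta_2$, mirroring the multiplicative reformulation of Theorem~\ref{thm:dominance}.

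The main obstacle I anticipate is not computational but definitional, namely making watertight the identification of \emph{overall success} with the conjunction of \emph{all stage successes}. The forward direction (all stages correct $\Rightarrow$ global max wins) relies on transitivity of $\succ$ and an induction over the tournament's directed acyclic structure; the reverse direction is built into Definition~\ref{def:success}. A secondary subtlety is that the stated equality $S_{\mathrm{SEQ}}(n)=(\alpha_2^{\,2}\beta_2)^{n-1}$ treats the expression as the probability of the sufficient event ``every stage succeeds,'' which under Definition~\ref{def:success} coincides with overall success; one should flag that outside that definition, the probability of globally-correct choice could weakly exceed this product because compensating errors on stages not involving the global maximizer are possible. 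I would note this as a remark so the ``only if'' direction of the equivalence is read against the precise Definition~\ref{def:success}.
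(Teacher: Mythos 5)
Your proposal is correct and follows essentially the same route as the paper: factor each stage's success as attention times conditional accuracy under Assumption~\ref{assump:attn-acc}, multiply over the $n-1$ independent binary stages to get $(\alpha_2^{\,2}\beta_2)^{n-1}$, compute $\alpha_n^{\,n}\beta_n$ for the single simultaneous stage, and compare directly. Your added care in justifying the $n-1$ stage count, linking stage-wise correctness to global correctness via transitivity, and noting that the identification rests on Definition~\ref{def:success} (since compensating errors could otherwise make the final choice correct without every stage succeeding) only tightens details the paper leaves implicit.
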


\begin{proof}
	For the simultaneous case, independence across $n$ items yields $\mathbb{P}(\text{attend all})=\alpha_n^n$, and conditional correctness is $\beta_n$, giving $S_{\mathrm{SIM}}(n)=\alpha_n^n\beta_n$.
	
	In the tournament, each of the $n-1$ binary stages succeeds with probability $\alpha_2^2\beta_2$ by within-stage independence. By independence across stages, multiply over $n-1$ stages to obtain $S_{\mathrm{SEQ}}(n)=(\alpha_2^2\beta_2)^{n-1}$. The dominance condition follows by direct comparison, with the equivalence justified as in Theorem~\ref{thm:dominance} and Corollary \ref{cor:sufficient}.
\end{proof}
\begin{theorem}[Sequential Choice Superiority]\label{thm:superiority}
	Let $A,B,D$ be three alternatives with true utilities $u(A),u(B),u(D)$, and let $C$ be a boundedly rational choice function. If the probability of a correct choice in a binary comparison is $q$ and in a ternary comparison is $r$, with $q > r^{1/2}$, then
	\begin{equation}
		\Pr\!\left( C(C(A,B),D) = \arg\max\{u(A),u(B),u(D)\} \right)
		> \Pr\!\left( C(A,B,D) = \arg\max\{u(A),u(B),u(D)\} \right).
	\end{equation}
\end{theorem}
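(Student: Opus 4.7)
The plan is to reduce Theorem~\ref{thm:superiority} to the arity-based dominance framework of Lemma~\ref{lem:strengthened} and Theorem~\ref{thm:dominance}, with the composite binary success probability $q$ playing the role of $\alpha_2^{2}\beta_2$ and the composite ternary success probability $r$ playing the role of $\alpha_3^{3}\beta_3$. The hypothesis $q > r^{1/2}$ is then the scalar counterpart of the multiplicative dominance condition, and the remaining work is to verify that the sequential architecture inherits at least the $q^{2}$ success probability suggested by two independent binary stages.

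The simultaneous side is immediate: by the definition of $r$, $\Pr\!\big(C(A,B,D) = \arg\max\{u(A),u(B),u(D)\}\big) = r$. For the sequential side I would condition on the identity of the global maximizer $g$. When $g \in \{A,B\}$, the only route to a correct final output is a correct Stage~1 (probability $q$) followed by a correct Stage~2 against $D$ (probability $q$, since $g$ remains the binary maximizer in that pair); by stage independence this contributes $q^{2}$. When $g = D$, both outputs of Stage~1 feed into a Stage~2 comparison in which $D$ is the binary maximizer, so Stage~2 returns $D$ with probability $q$ regardless of Stage~1, yielding a sequential success probability of $q$, which strictly exceeds $q^{2}$ since $q\in(0,1)$. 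Hence in every case
\[
\Pr\!\big(C(C(A,B),D) = \arg\max\{u(A),u(B),u(D)\}\big) \;\ge\; q^{2}.
\]

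To close, since $q,r\in(0,1)$ and $t \mapsto t^{2}$ is strictly monotone there, the hypothesis $q > r^{1/2}$ is equivalent to $q^{2} > r$. Chaining with the lower bound just derived gives the desired strict inequality $\Pr(\text{sequential}) \ge q^{2} > r = \Pr(\text{simultaneous})$, which is exactly the conclusion of the theorem.

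The step I expect to require the most care is the case split on $g$. A naive identity of the form \emph{sequential succeeds if and only if both binary stages pick their binary maximizer} is tight when $g \in \{A,B\}$ but undercounts when $g = D$, because a Stage~1 error is then harmless and can be overturned by Stage~2. Writing the consolidated bound as an inequality rather than an equation avoids any dependence on the prior distribution over which alternative is the global maximizer, and reduces the rest of the argument to routine appeals to within- and across-stage independence (Assumption~\ref{assump:attn-acc}) and to the monotonicity of squaring on $(0,1)$.
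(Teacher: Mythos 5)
Your proposal is correct and follows essentially the same route as the paper: the simultaneous probability is $r$ by definition, the sequential probability is bounded below by $q^{2}$ from two independent binary stages, and $q > r^{1/2}$ yields $q^{2} > r$. The only difference is that the paper simply asserts the sequential probability \emph{equals} $q^{2}$, whereas your case split on whether the global maximizer is $D$ (where the success probability is in fact $q \ge q^{2}$) replaces that equality with a lower bound --- a minor refinement that, if anything, makes the argument slightly more careful than the paper's own.
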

\begin{proof}
	In the simultaneous case, the probability of choosing the highest-utility option is $r$:
	\begin{equation}
		\Pr\!\left( C(A,B,D) = \arg\max\{u(A),u(B),u(D)\} \right) = r.
	\end{equation}
	In the sequential case, the DM chooses correctly in Step 1 with probability $q$, and in Step 2 again with probability $q$, independently:
	\begin{equation}
		\Pr\!\left( C(C(A,B),D) = \arg\max\{u(A),u(B),u(D)\} \right) = q^2.
	\end{equation}
	If $q > r^{1/2}$, then $q^2 > r$, proving the claim. This is true due to the attention monotonicity axiom. \\	
	For illustration, if $q = 0.95$ and $r = 0.70$, then $q^2 = 0.9025 > r = 0.70$, confirming that sequential architecture yields a higher probability of rational choice under the stated condition.
\end{proof}
\section{Theoretical Support for the Sequential Rationality Hypothesis}

The Sequential Rationality Hypothesis asserts that, under bounded attention, decision-makers (DMs) are more likely to select the utility-maximizing alternative when options are presented in sequential pairwise comparisons rather than in a simultaneous multi-option set. We establish this claim formally within the Random Attention Model (RAM) framework.

\subsection{Sequential Choice within the RAM Framework}

Let $X$ be a finite set of alternatives, $u: X \to \mathbb{R}$ a strict utility ordering, and $\mu(\cdot|S)$ the RAM attention rule specifying the probability of attending to subset $T \subseteq S$.  
For a given set $S \subseteq X$, the RAM choice probability of $x \in S$ is:
\[
\pi(x\mid S) \;=\; \sum_{T \subseteq S : x = \max_{\succ} T} \mu(T|S),
\]
where $\max_{\succ} T$ denotes the $\succ$-maximal element of $T$.  
A \emph{RAM-consistent} binary choice rule $C(\cdot)$ satisfies:
\[
C(x,y) \;=\; \arg\max_{\succ} \{z \in \{x,y\} \cap T\}, \quad \text{with } T \sim \mu(\cdot|\{x,y\}).
\]
That is, binary choices are made by attending to one or both items according to $\mu$, and selecting the attended item with highest utility.

\subsection{Sequential Choice Operator}

Let $C: X \times X \to X$ be a binary RAM-consistent choice rule. The \emph{sequential choice operator} $C^{\mathrm{SEQ}}$ extends $C$ to any ordered tuple $(x_1, \dots, x_n)$ recursively as:
\begin{align}
	&\text{Base case:} && C^{\mathrm{SEQ}}(x_i, x_j) \;=\; C(x_i, x_j), \quad \forall\, x_i, x_j \in X, \; i \neq j, \label{eq:seq_base}\\
	&\text{Recursion:} && C^{\mathrm{SEQ}}(x_1, \dots, x_n) \;=\; C\!\big(x_1,\, C^{\mathrm{SEQ}}(x_2, \dots, x_n)\big), \quad n \ge 3. \label{eq:seq_rec}
\end{align}
Operationally, $C^{\mathrm{SEQ}}$ induces a \emph{right-associative binary tournament}: the DM compares two items at a time, each decision governed by the RAM attention mechanism, until one alternative remains.  
We denote the \emph{simultaneous} choice rule by $C^{\mathrm{SIM}}(S)$ for any $S \subseteq X$.

\subsection{Main Results}

\begin{lemma}[Monotonicity and Pairwise Preservation]\label{lem:monotonicity}
	Let $\mu$ be a monotonic attention rule, and let $\succ$ denote the strict preference ordering induced by $u$. Then, for any $x, y \in X$ with $x \succ y$,
	\[
	\pi(x \mid \{x,y\}) \;\geq\; \pi(x \mid S), \quad \forall S \supseteq \{x,y\}.
	\]
\end{lemma}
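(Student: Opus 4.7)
The plan is to prove this by induction on $|S|$, reducing the general case to the binary base by peeling off one element of $S \setminus \{x,y\}$ at a time. The base case $|S|=2$ gives $S=\{x,y\}$ and the inequality is trivial equality. For the inductive step, fix $a \in S \setminus \{x,y\}$ and let $S' = S \setminus \{a\}$; combining the inductive hypothesis $\pi(x\mid S') \le \pi(x\mid\{x,y\})$ with the single-removal claim $\pi(x\mid S) \le \pi(x\mid S')$ finishes the argument, so the real work lies in the one-step reduction.

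For the one-step reduction, I would expand the RAM choice probability $\pi(x\mid S) = \sum_{T \subseteq S,\, x = \max_\succ T} \mu(T\mid S)$ and partition the sum according to whether $a \in T$:
\begin{equation*}
\pi(x \mid S) \;=\; \underbrace{\sum_{\substack{T \subseteq S' \\ x = \max_\succ T}} \mu(T \mid S)}_{A_1} \;+\; \underbrace{\sum_{\substack{T \subseteq S,\; a \in T \\ x = \max_\succ T}} \mu(T \mid S)}_{A_2}.
\end{equation*}
For $A_1$, the monotone attention axiom (Definition~\ref{3}) applied with $a \in S \setminus T$ gives $\mu(T\mid S) \le \mu(T\mid S')$ termwise, so $A_1 \le \pi(x\mid S')$. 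For $A_2$, each summand satisfies $a \in T$ with $x \succ a$ (since $x = \max_\succ T$ and $a \ne x$); I would pair each such $T$ with its reduction $T \setminus \{a\} \subseteq S'$, which is still $x$-maximal, and then use probability conservation $\sum_{T \subseteq S}\mu(T\mid S) = \sum_{T' \subseteq S'}\mu(T'\mid S') = 1$ to rewrite $A_2$ as a telescoping deficit between $\mu(\cdot\mid S')$ and $\mu(\cdot\mid S)$ that, when restricted to the $x$-maximal slice, yields $A_2 \le \pi(x\mid S') - A_1$. Adding the two bounds closes the induction.

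The main obstacle is restricting the global conservation identity to the $x$-maximal slice: the unrestricted identity only balances totals, so one must verify that the attention mass redirected onto $a$-inclusive subsets lands predominantly on $x$-maximal ones, matched by a corresponding drop on $x$-maximal subsets of $S'$. The lever I would use is the strict inequality $x \succ a$ combined with iterated monotonicity along the chain $S \supseteq S' \supseteq \cdots \supseteq \{x,y\}$: because $a$ is $\succ$-dominated by $x$, any set whose $\succ$-maximum strictly dominates $x$ is unaffected by the presence or absence of $a$, so the displaced mass induced by removing $a$ must be absorbed within the $x$-maximal slice. Making this precise (rather than merely suggestive) is the technical heart of the argument and where I would expect to spend the most effort.
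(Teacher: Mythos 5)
Your induction is set up sensibly, and half of it works: the base case is trivial, and whenever the removed element satisfies $a \succ x$ your decomposition closes immediately, because then no $x$-maximal $T$ contains $a$ (so $A_2=0$) and the termwise bound $\mu(T\mid S)\le\mu(T\mid S\setminus\{a\})$ on $A_1$ gives $\pi(x\mid S)\le\pi(x\mid S\setminus\{a\})$. The genuine gap is exactly where you flag it, and it cannot be closed from the stated hypotheses: when $x \succ a$ — the case you must handle, since the lemma forces you to strip away alternatives dominated by $x$ — the assertion that the attention mass freed by deleting $a$ ``must be absorbed within the $x$-maximal slice'' does not follow from Definition~\ref{3}. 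Monotonic attention is only a one-directional family of inequalities; it has no conservation/mass-transport structure that tracks where the redistributed probability goes, so the gain from removing $a$ may land entirely on non-$x$-maximal subsets (on $\{y\}$, or on sets whose maximum beats $x$ — and note monotonicity does not say such sets are ``unaffected''; they too may weakly gain). Indeed, in the random attention model of \cite{cattaneo2020random} a regularity violation $\pi(x\mid S)>\pi(x\mid S\setminus\{a\})$ is explicitly permitted and is precisely the revealed-preference signature of $x\succ a$, so the one-step reduction $\pi(x\mid S)\le\pi(x\mid S')$ you rely on is false in general.

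In fact the lemma itself fails under monotonic attention alone, so no repair of your argument (or of the paper's two-sentence verbal version of the same intuition) is possible without extra assumptions. Take $X=\{x,y,z\}$ with $x\succ z\succ y$, and define
\[
\mu(X\mid X)=1,\qquad \mu(\{x,y\}\mid\{x,y\})=\mu(\{y\}\mid\{x,y\})=\tfrac12,\qquad \mu(\{x\}\mid\{x,y\})=0,
\]
with, say, full-set attention on all remaining menus. Every instance of Definition~\ref{3} holds (each required inequality is of the form $0\le\cdot$ or $\cdot\le 1$), yet
\[
\pi(x\mid X)=1\;>\;\tfrac12\;=\;\pi(x\mid\{x,y\}),
\]
violating both your one-step claim (with $a=z$, where $x\succ z$) and the lemma's conclusion. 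What is actually needed is an additional hypothesis guaranteeing that binary menus command sufficiently concentrated attention — e.g.\ $\mu(\{x,y\}\mid\{x,y\})$ close to one, as in the paper's numerical illustration, or an explicit assumption that attention probabilities decline with menu size in an aggregate sense — and any honest proof should state such a condition rather than derive the inequality from monotonicity alone.
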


\begin{proof}
	Monotonicity implies that removing irrelevant alternatives from $S$ weakly increases the probability of attending to $\{x,y\}$. In RAM, choice is determined by the maximal element in the attended set. Comparing $x$ and $y$ alone removes competition from $S \setminus \{x,y\}$ that could either capture attention or induce choice errors. Hence, $\pi(x \mid \{x,y\})$ is a weak upper bound on $\pi(x \mid S)$. This is supported by the Theorem \ref{thm:superiority}.
\end{proof}

\begin{theorem}[Sequential Superiority under Bounded Attention]\label{thm:sequential_superiority}
	Let $X = \{A,B,D\}$ with $A \succ D \succ B$, and let $C(\cdot)$ be RAM-consistent with monotonic attention. Then:
	\[
	\Pr\!\left( C^{\mathrm{SEQ}}(A,B,D) = A \right) \;>\; \Pr\!\left( C^{\mathrm{SIM}}(\{A,B,D\}) = A \right).
	\]
\end{theorem}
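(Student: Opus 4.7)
}

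The plan is to decompose the sequential event by conditioning on the outcome of the first binary stage and then bound each resulting pairwise term from below using Lemma \ref{lem:monotonicity}. By the recursion defining $C^{\mathrm{SEQ}}$, we have $C^{\mathrm{SEQ}}(A,B,D) = C\!\big(A,\, C(B,D)\big)$, so the law of total probability yields
\[
\Pr\!\big(C^{\mathrm{SEQ}}(A,B,D)=A\big)
\;=\; \pi(D\mid\{B,D\})\,\pi(A\mid\{A,D\}) \;+\; \pi(B\mid\{B,D\})\,\pi(A\mid\{A,B\}),
\]
which expresses sequential success as a convex combination of two pairwise success probabilities against $A$.

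Next, I would invoke Lemma \ref{lem:monotonicity} on both pairwise probabilities separately, with $S=\{A,B,D\}$ playing the role of the larger menu. This gives
\[
\pi(A\mid\{A,D\}) \;\ge\; \pi(A\mid\{A,B,D\}) \qquad\text{and}\qquad \pi(A\mid\{A,B\}) \;\ge\; \pi(A\mid\{A,B,D\}).
\]
Since the coefficients $\pi(D\mid\{B,D\})$ and $\pi(B\mid\{B,D\})$ sum to one, substituting these bounds into the decomposition collapses the convex combination into a single lower bound equal to $\pi(A\mid\{A,B,D\})=\Pr(C^{\mathrm{SIM}}(\{A,B,D\})=A)$, which delivers the weak inequality immediately.

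To upgrade to strict inequality, I would argue that at least one of the two monotonicity inequalities above is strict. Specifically, if the ternary attention rule assigns positive probability to any proper subset of $\{A,B,D\}$ that excludes $A$, i.e., $\mu(\{B\}\mid S) + \mu(\{D\}\mid S) + \mu(\{B,D\}\mid S) > 0$, then the RAM expansion of $\pi(A\mid\{A,B,D\})$ strictly undershoots the binary expressions, because every subset $T\subseteq\{A,B\}$ or $T\subseteq\{A,D\}$ in which $A$ is maximal has strictly larger $\mu(T\mid\cdot)$ weight in the binary menu than in the ternary menu by Axiom (A3). This is exactly the content of the cognitive parsimony axiom (A4) applied to $x=A$, and it rules out the degenerate case where attention is trivially concentrated on $\{A,B,D\}$.

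The main obstacle I foresee is making the strictness step airtight rather than relying on the informal ``empirical binary advantage'' argument used in the heterogeneous-$p$ discussion earlier in the section. A clean path is to impose a mild nondegeneracy hypothesis on $\mu$, namely that the attention rule at the ternary menu does not place unit mass on $\{A,B,D\}$ itself; under this condition, Axiom (A3) upgrades at least one of the two inequalities to strict, and the convex combination then inherits strict inequality. If the authors prefer to avoid an explicit nondegeneracy assumption, one may instead invoke Theorem \ref{thm:superiority} with $q=\min\{\pi(A\mid\{A,D\}),\pi(D\mid\{B,D\})\}$ and $r=\pi(A\mid\{A,B,D\})$, since the binary consistency axiom (A1) guarantees $q>1/2$ while the cumulative attention loss in the ternary case forces $r<q^{2}$ whenever any distractor probability is strictly positive.
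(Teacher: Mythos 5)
Your route differs from the paper's. The paper reads the recursion left-associatively (compare $A$ vs.\ $B$, then the survivor vs.\ $D$), writes the sequential success probability as the product $q\cdot r$ of $A$'s two binary win probabilities, and then simply asserts, citing Theorem~\ref{thm:superiority} together with an extra informal premise that $\mu(T\mid S)$ declines with $|T|$, that the simultaneous probability falls below $q\cdot r$. You instead read the recursion right-associatively as $C(A,C(B,D))$, decompose by the outcome of the first stage, and bound each branch by Lemma~\ref{lem:monotonicity}; since the branch weights $\pi(B\mid\{B,D\})+\pi(D\mid\{B,D\})=1$, the convex combination immediately yields the weak inequality $\Pr(C^{\mathrm{SEQ}}(A,B,D)=A)\ \ge\ \pi(A\mid\{A,B,D\})$. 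Up to that point your argument is cleaner and more explicit than the paper's.

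The genuine gap is your strictness step. Axiom (A3) and Lemma~\ref{lem:monotonicity} give only weak inequalities, and neither of your proposed nondegeneracy conditions forces a strict one. Concretely, let $\mu(\{B,D\}\mid\{A,B,D\})=0.2$ and $\mu(\{A,B,D\}\mid\{A,B,D\})=0.8$, with binary rules $\mu(\{B\}\mid\{A,B\})=\mu(\{D\}\mid\{A,D\})=0.2$ and $\mu(\{A,B\}\mid\{A,B\})=\mu(\{A,D\}\mid\{A,D\})=0.8$: monotonic attention holds, the ternary rule puts positive mass on an $A$-excluding subset and does not put unit mass on $\{A,B,D\}$, yet $\pi(A\mid\{A,B\})=\pi(A\mid\{A,D\})=\pi(A\mid\{A,B,D\})=0.8$, so sequential and simultaneous success probabilities coincide and the strict inequality fails. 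The same objection hits your fallback through Theorem~\ref{thm:superiority}, since $q>r^{1/2}$ does not follow from RAM-consistency plus monotonicity. In fairness, the paper's own proof glosses over exactly this point by smuggling in the undeclared ``$\mu(T\mid S)$ decreasing in $|T|$'' assumption, and its Proposition~\ref{prop:eq_seq_sim} shows equality under full attention, so some added hypothesis is unavoidable; your instinct to state one explicitly is right, but it must be strong enough to make at least one binary gap strictly positive with a strictly positive branch weight --- e.g., requiring that the probability of overlooking $A$ be strictly smaller in each binary menu than in the ternary menu --- rather than merely requiring that the ternary rule leak some mass away from $A$.
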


\begin{proof}
	Let $q = \Pr(C(A,B) = A)$ and $r = \Pr(C(A,D) = A)$. From \eqref{eq:seq_rec},
	\[
	\Pr\big(C^{\mathrm{SEQ}}(A,B,D) = A\big) = q \cdot r.
	\]
	In the simultaneous case, $\Pr(C^{\mathrm{SIM}}(\{A,B,D\}) = A)$ is an attention-weighted sum over consideration sets, some excluding $A$. Since $\mu$ is monotonic and $\mu(T|S)$ declines with $|T|$, $\Pr(C^{\mathrm{SIM}}(\{A,B,D\}) = A) < q \cdot r$. This is supported by the Theorem \ref{thm:superiority}.
\end{proof}

\begin{theorem}[Sequential Attention Amplifies Correct Selection]\label{thm:attention_amplification}
	Let $\mu$ be monotonic with $\mu(T|S)$ decreasing in $|T|$, and let $u$ be the true utility function. Then, for all distinct $x,y,z \in X$,
	\[
	\mathbb{E}\big[ \mathbf{1}\{ C^{\mathrm{SEQ}}(x,y,z) = \arg\max\{u(x),u(y),u(z)\} \} \big] \;>\; \mathbb{E}\big[ \mathbf{1}\{ C^{\mathrm{SIM}}(\{x,y,z\}) = \arg\max\{u(x),u(y),u(z)\} \} \big].
	\]
\end{theorem}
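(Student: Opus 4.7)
The plan is to reduce this statement to Theorem~\ref{thm:sequential_superiority} by observing that both sides of the inequality depend on the input triple only through the induced utility ordering, so the theorem asserts essentially the same conclusion in a relabeled form. First I would rewrite each expectation as the probability of the indicated event: $\mathbb{E}[\mathbf{1}\{C^{\mathrm{SEQ}}(x,y,z)=\arg\max u\}]=\Pr(C^{\mathrm{SEQ}}(x,y,z)=m)$ and similarly for $C^{\mathrm{SIM}}$, where $m := \arg\max_{w\in\{x,y,z\}}u(w)$. Let the remaining two items be relabeled $s$ (second best) and $t$ (third best), so that $m \succ s \succ t$, matching the hypothesis of Theorem~\ref{thm:sequential_superiority}.

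Next I would invoke Theorem~\ref{thm:sequential_superiority} directly (with $A=m$, $D=s$, $B=t$), which already establishes the strict inequality in exactly this setting up to notation. However, one subtlety must be addressed: the current theorem allows any ordering of the input tuple $(x,y,z)$ fed into the right-associative recursion~\eqref{eq:seq_rec}, whereas Theorem~\ref{thm:sequential_superiority} fixes one ordering. To handle this, I would case-split on the position occupied by $m$ in the input tuple and use Lemma~\ref{lem:monotonicity} to bound each binary-stage RAM probability $\pi(m\mid\{m,\cdot\})$ from below by the corresponding simultaneous probability $\pi(m\mid\{x,y,z\})$. Multiplying across the two independent binary stages (justified by the stage-independence clause of Assumption~\ref{assump:attn-acc}) yields $\Pr(C^{\mathrm{SEQ}}(x,y,z)=m)$ as a product $q\cdot q'$ of two factors, each weakly dominating the simultaneous RAM-probability of $m$ surviving a comparison in which the distractor participates.

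The main obstacle, as in Theorem~\ref{thm:sequential_superiority}, is upgrading weak dominance to the strict inequality stated in the claim. Strictness follows from the combination of monotonic attention and cognitive parsimony (axioms A3 and A4): the non-degenerate, monotonic rule $\mu(\cdot\mid\{x,y,z\})$ necessarily assigns strictly positive mass to at least one proper subset of $\{x,y,z\}$ that excludes $m$, so $\Pr(C^{\mathrm{SIM}}(\{x,y,z\})=m)=\sum_{T\ni m}\mu(T\mid\{x,y,z\})$ is strictly smaller than the product $q\cdot q'$ achievable by the sequential architecture. I would conclude by observing that the argument is uniform across the three possible positions of $m$ in the input tuple, and hence the strict inequality holds for all distinct $x,y,z\in X$, completing the reduction.
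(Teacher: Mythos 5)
Your reduction has a genuine gap at the multiplicative step. For the orderings in which the maximizer $m$ must survive two binary stages, you write $\Pr(C^{\mathrm{SEQ}}=m)=q\cdot q'$ and then argue that since Lemma~\ref{lem:monotonicity} gives $q\ge\pi(m\mid\{x,y,z\})$ and $q'\ge\pi(m\mid\{x,y,z\})$, the product must exceed the simultaneous probability. That inference is false: the product of two numbers in $(0,1)$, each weakly above $\pi(m\mid S)$, can lie strictly below $\pi(m\mid S)$ (take $q=q'=0.7$ and $\pi(m\mid S)=0.6$; one can write down a monotonic, non-degenerate $\mu$ realizing such values, since monotonicity only caps the mass on distractor subsets of the ternary menu by the corresponding binary masses and imposes no floor on $\mu(\{m,\cdot\}\mid\{m,\cdot\})$). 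Your strictness argument does not repair this: positive attention mass on subsets excluding $m$ only shows $\pi(m\mid S)<1$, which says nothing about whether $q\cdot q'$ clears $\pi(m\mid S)$. What is actually needed is a quantitative condition of the type the paper imposes elsewhere, namely $q\,q'>\pi(m\mid S)$ (the analogue of $q>r^{1/2}$ in Theorem~\ref{thm:superiority} or of $\alpha_2^4\beta_2^2\ge\alpha_3^3\beta_3$ in Theorem~\ref{thm:dominance}); the hypotheses "monotonic $\mu$ with $\mu(T\mid S)$ decreasing in $|T|$" do not by themselves deliver it. A secondary imprecision: under the right-associative recursion~\eqref{eq:seq_rec}, when $m$ occupies the first slot the sequential probability is a convex combination $\Pr(C(y,z)=s)\,\pi(m\mid\{m,s\})+\Pr(C(y,z)=t)\,\pi(m\mid\{m,t\})$, not a product, so your uniform "product $q\cdot q'$" treatment only matches two of the three cases (in the convex-combination case Lemma~\ref{lem:monotonicity} does give weak dominance, so ironically that is the case your argument handles best).

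For comparison, the paper's own proof of this theorem is a three-sentence informal assertion — binary menus receive more attention, two binary stages are each more accurate than one ternary evaluation, hence sequential wins — so it asserts rather than derives precisely the quantitative comparison your argument also fails to establish. Your instinct to route the claim through Theorem~\ref{thm:sequential_superiority} and Lemma~\ref{lem:monotonicity}, and to case-split on the position of $m$, is more careful than the original, but as written neither your chain of inequalities nor the paper's narrative closes the gap between "each binary stage is weakly better" and "the composition of two binary stages is strictly better than one ternary stage."
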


\begin{proof}
	In binary sets, $\mu(T|S)$ is higher than in ternary sets. The sequential structure executes two binary stages, each with higher accuracy than a single ternary evaluation. In contrast, simultaneous choice distributes attention over all three options, increasing the chance of overlooking the utility-maximizing item.
\end{proof}
\begin{proposition}[Sequential Choice Divergence]\label{thm:divergence}
	Let $X$ be a finite choice set, $u: X \to \mathbb{R}$ a utility function, and $C_\alpha$ a bounded-attention choice rule with fidelity $\alpha < 1$, degrading with set size. Then there exists a triple $\{A,B,D\} \subset X$ such that
	\begin{equation}
		C_\alpha\big(C_\alpha(A,B),D\big) \neq C_\alpha(A,B,D)
	\end{equation}
	with strictly positive probability.
\end{proposition}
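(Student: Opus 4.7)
The plan is to reduce the existence claim to a direct consequence of Theorem~\ref{thm:superiority} by exhibiting one concrete triple on which the two procedures induce distinct output distributions, and then converting that distributional gap into a positive-probability event of literal disagreement.

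First, I would fix a triple $\{A,B,D\}\subset X$ with $u(A)>u(D)>u(B)$, so that $A$ is the unique utility maximizer and $D$ acts as an intermediate distractor. Because $C_\alpha$ has fidelity $\alpha<1$ strictly degrading in set size (as required by Axioms (A3) and (A4) and the definition of the bounded rationality operator), none of the three candidate outcomes is ruled out in either procedure: every element of $\{A,B,D\}$ is returned with strictly positive probability by $C_\alpha(A,B,D)$ directly, and by $C_\alpha(C_\alpha(A,B),D)$ because the first binary stage yields $A$ or $B$ with positive probability, after which the second binary stage can output the winner or $D$ with positive probability.

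Next, I would compare the marginal probability of selecting $A$ under each architecture. By independence of attention across stages (Assumption~\ref{assump:attn-acc}),
\[
\Pr\bigl(C_\alpha(C_\alpha(A,B),D)=A\bigr)=\Phi(2,\alpha)^{2},
\qquad
\Pr\bigl(C_\alpha(A,B,D)=A\bigr)=\Phi(3,\alpha).
\]
Invoking Theorem~\ref{thm:superiority} with $q=\Phi(2,\alpha)$ and $r=\Phi(3,\alpha)$ yields $\Phi(2,\alpha)^{2}>\Phi(3,\alpha)$ under the binary-fidelity condition $\Phi(2,\alpha)>\Phi(3,\alpha)^{1/2}$, which is the quantitative form of the degradation encoded in the operator. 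Hence the two procedures induce genuinely different distributions over $\{A,B,D\}$.

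Finally, I would translate the distributional inequality into the pointwise divergence asserted in the statement. Treating the two procedures as random variables on a common attention-realization space (either by an explicit independent-draws coupling or by the i.i.d.\ attention draws already embedded in the operator), if they agreed almost surely then their marginal laws would coincide, contradicting the strict inequality at $A$. Therefore $\Pr(C_\alpha(C_\alpha(A,B),D)\neq C_\alpha(A,B,D))>0$, which is the desired conclusion. The main obstacle I expect is not the algebra but the probability-space bookkeeping: one must be careful that ``disagreement with positive probability'' refers to a well-defined coupling, which I would handle by invoking the independent-attention-draws convention of Assumption~\ref{assump:attn-acc}, ensuring that the two procedures live on the same product space and that the set on which their outputs differ is measurable and of positive measure.
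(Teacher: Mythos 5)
Your proposal is correct and takes essentially the same route as the paper, whose own proof is just a one-line appeal to Lemma~1 and Theorems~1--3: you instantiate a triple with distinct utilities, invoke the sequential-superiority comparison $q^{2}=\Phi(2,\alpha)^{2}>\Phi(3,\alpha)=r$, and conclude divergence. In fact you supply the step the paper leaves implicit---passing from the difference in marginal laws (or the independent-attention coupling) to a positive-probability event of literal disagreement---and your only gloss, reading the condition $\Phi(2,\alpha)>\Phi(3,\alpha)^{1/2}$ off ``fidelity degrading with set size,'' is the same implicit strengthening the paper relies on when it cites Theorem~\ref{thm:superiority}.
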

\begin{proof}
	The proof is straightaway derived from the Lemma 1, Theorem 1-3.
\end{proof}
\subsection{Extensions: Menu Design, Cognitive Cost, and Testability}

\begin{corollary}[Sequential Menu Design Improves Optimal Selection]\label{cor:menu_design}
	Let $S = \{x_1, \dots, x_n\}$ with $u(x_1) > u(x_2) > \dots > u(x_n)$, and let $C(\cdot)$ be RAM-consistent. Then:
	\[
	\Pr\big( C^{\mathrm{SEQ}}(S) = x_1 \big) \;>\; \Pr\big( C^{\mathrm{SIM}}(S) = x_1 \big).
	\]
\end{corollary}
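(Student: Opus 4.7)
The plan is to reduce the $n$-ary comparison to the binary dominance already established by Lemma~\ref{lem:monotonicity}. Writing $S = \{x_1, \dots, x_n\}$ with $x_1$ the global $\succ$-maximizer, the right-associative recursion~\eqref{eq:seq_rec} yields $C^{\mathrm{SEQ}}(S) = C\!\big(x_1, C^{\mathrm{SEQ}}(x_2, \dots, x_n)\big)$. Let $Y$ denote the (random) output of the inner tournament $C^{\mathrm{SEQ}}(x_2, \dots, x_n)$, which by construction takes values in $S \setminus \{x_1\}$. Conditioning on $Y$ and marginalizing gives
\[
\Pr\!\big(C^{\mathrm{SEQ}}(S) = x_1\big) \;=\; \sum_{y \in S \setminus \{x_1\}} \Pr(Y = y)\,\pi\!\big(x_1 \mid \{x_1, y\}\big),
\]
so the $n$-ary success probability is a convex combination of binary pairwise probabilities. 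The simultaneous success probability is the single scalar $\pi(x_1 \mid S)$, which makes the comparison term-by-term transparent.

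Next I would apply Lemma~\ref{lem:monotonicity} within the sum: for each $y \in S \setminus \{x_1\}$, $\pi(x_1 \mid \{x_1, y\}) \ge \pi(x_1 \mid S)$, because $\{x_1, y\} \subseteq S$ and monotonic attention weakly favours the smaller set. Since the weights $\Pr(Y=y)$ form a probability distribution, the convex combination inherits the lower bound, giving $\Pr(C^{\mathrm{SEQ}}(S) = x_1) \ge \pi(x_1 \mid S) = \Pr(C^{\mathrm{SIM}}(S) = x_1)$. To sharpen this into the strict inequality stated in the corollary, I would invoke the non-degeneracy clause of the RAM attention rule together with the bounded-attention fidelity $\alpha<1$ used throughout $C_\alpha$: these guarantee that at least one proper consideration set $T \subsetneq S$ with $x_1 \notin T$ receives strictly positive mass under $\mu(\cdot \mid S)$, producing a strict gap $\pi(x_1 \mid \{x_1, y\}) > \pi(x_1 \mid S)$ for at least one $y$ that the inner tournament reaches with positive probability. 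This step also recovers Theorem~\ref{thm:sequential_superiority} when $n=3$, which serves as a sanity check.

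The main obstacle is precisely this promotion of $\ge$ to $>$, rather than the combinatorial expansion. A purely weak monotonicity assumption on $\mu$ would only give weak dominance through the telescoped sum, so the strict conclusion truly rests on the non-degeneracy of $\mu$ and on the inner distribution of $Y$ assigning positive mass to some $y$ for which the pairwise--ternary attention gap is strict. A secondary, milder difficulty is justifying independence of the binary comparisons comprising $C^{\mathrm{SEQ}}$, which I would discharge by appealing to Assumption~\ref{assump:attn-acc} (stage-wise independence of attention and choice). Once strictness is pinned down, the corollary follows without any explicit induction on $n$, since the reduction above holds uniformly for every $n \ge 3$.
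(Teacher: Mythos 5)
Your argument is correct in outline but takes a genuinely different route from the paper. The paper's proof is a one-line recursion: it applies Theorem~\ref{thm:sequential_superiority} stage by stage through the tournament, asserting that at each stage the probability of retaining $x_1$ strictly exceeds the simultaneous benchmark. You instead condition on the output $Y$ of the inner tournament, write $\Pr(C^{\mathrm{SEQ}}(S)=x_1)=\sum_{y\neq x_1}\Pr(Y=y)\,\pi(x_1\mid\{x_1,y\})$, and bound each term below by $\pi(x_1\mid S)$ via Lemma~\ref{lem:monotonicity}; this avoids induction on $n$ altogether and turns the comparison into a transparent term-by-term inequality, which is tighter than the paper's sketch. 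Your diagnosis of the real difficulty is also accurate and, frankly, more candid than the paper: weak monotonicity alone only yields $\ge$, and by Proposition~\ref{prop:eq_seq_sim} equality genuinely occurs when $\mu(S\mid S)=1$ with deterministic maximization, so strictness cannot follow from the corollary's stated hypotheses alone. One caveat on your proposed fix: the paper's non-degeneracy condition merely normalizes $\mu(\cdot\mid S)$ over nonempty subsets and does not force positive mass on consideration sets excluding $x_1$, and the fidelity parameter $\alpha<1$ belongs to the separate operator $C_\alpha$, not to the RAM-consistent $C$ appearing in the corollary; what you actually need is the attention-friction condition of Corollary~\ref{cor:strict_advantage}, or equivalently the implicit premise used in proving Theorem~\ref{thm:sequential_superiority} that $\mu(T\mid S)$ declines with $|T|$ and some $T\not\ni x_1$ receives positive weight. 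Since the paper's own proof quietly relies on that same unproven strictness, your version, with this assumption made explicit, is the more rigorous of the two.
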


\begin{proof}
	Apply Theorem~\ref{thm:sequential_superiority} recursively across tournament stages. At each stage, the probability of retaining $x_1$ is strictly greater than in the simultaneous case, so the final selection probability is also greater.
\end{proof}

\begin{proposition}[Sequential Choice Reduces Expected Cognitive Cost]\label{prop:attention_cost}
	Let $\phi: \mathbb{N} \to \mathbb{R}_+$ be the attention cost of evaluating a set, strictly increasing and convex. For $n = 2^k$ alternatives evaluated in $k$ binary stages:
	\[
	\sum_{t=1}^k \phi(2) \;<\; \phi(2^k).
	\]
\end{proposition}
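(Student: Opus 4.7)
The left-hand side collapses to $k\,\phi(2)$, so the claim reduces to $k\,\phi(2) < \phi(2^k)$. My plan combines two ingredients: a superadditivity consequence of convexity that lower-bounds $\phi(2^k)$ via repeated doubling, and the elementary geometric-versus-linear inequality $2^{k-1} \ge k$.

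First, I would adopt the natural normalization $\phi(0) = 0$ (vanishing cost on the empty menu), which together with convexity yields superadditivity: $\phi(a+b) \ge \phi(a) + \phi(b)$ for all $a,b \ge 0$. This is obtained by writing $\phi(a) \le \tfrac{b}{a+b}\phi(0) + \tfrac{a}{a+b}\phi(a+b)$ and a symmetric inequality for $\phi(b)$ from the definition of convexity, then summing. Without the $\phi(0)=0$ anchor, convexity alone is not enough, so I would flag this as a modeling assumption consistent with the paper's use of $\phi$ as an attention cost.

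Next, I would induct on $k$ to establish $\phi(2^k) \ge 2^{k-1}\phi(2)$. The inductive step decomposes $\phi(2^k) = \phi(2^{k-1} + 2^{k-1}) \ge 2\,\phi(2^{k-1})$ by superadditivity, and the inductive hypothesis closes the loop. A one-line induction separately gives $2^{k-1} \ge k$ for all $k \ge 1$. Since strict monotonicity plus $\phi(0)=0$ forces $\phi(2) > 0$, multiplying these yields $\phi(2^k) \ge k\,\phi(2)$, the weak form of the target.

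The main obstacle is upgrading this weak inequality to a strict one, and the boundary cases $k=1,2$ are where the argument needs care. For $k \ge 3$, the strict arithmetic gap $2^{k-1} > k$ combined with $\phi(2) > 0$ delivers strict inequality automatically. For $k = 2$ strict inequality requires genuine strict convexity so that $\phi(4) > 2\,\phi(2)$; for $k = 1$ the two sides coincide, so the proposition is vacuous there. I would therefore note explicitly that the proposition should be read for $k \ge 2$ under strict convexity (or $k \ge 3$ under mere convexity), a minor qualification that does not impinge on the substantive cognitive-cost interpretation, where the relevant cost schedules (quadratic, exponential, or other strictly convex forms) satisfy strict convexity by construction.
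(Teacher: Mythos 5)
Your proof is correct under the qualifications you state, and it takes a genuinely different route from the paper. The paper argues via Jensen's inequality: it compares the even allocation $(2,2,\dots,2)$ with the concentrated allocation $(2^k,0,\dots,0)$, asserts $\phi(2) < \frac{\phi(2^k)+0+\cdots+0}{k}$ (implicitly setting $\phi(0)=0$ and implicitly using monotonicity to bridge $\phi(2)\le\phi(2^k/k)$), and multiplies by $k$. You instead derive superadditivity $\phi(a+b)\ge\phi(a)+\phi(b)$ from convexity plus the anchor $\phi(0)=0$, then a doubling induction gives the stronger bound $\phi(2^k)\ge 2^{k-1}\phi(2)$, which with $2^{k-1}\ge k$ and $\phi(2)>0$ yields the claim; in substance your superadditivity step is the chord inequality through $0$ and $2^k$, so the convexity content is close to the paper's, but the decomposition (repeated doubling rather than a one-shot Jensen average) is different. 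What your route buys is an exponentially stronger lower bound than $k\,\phi(2)$ and an honest accounting of strictness: you correctly note that the strict inequality fails at $k=1$, needs strict convexity at $k=2$, and holds for $k\ge 3$ even for merely convex $\phi$ --- indeed $\phi(x)=x$ is strictly increasing and convex yet gives equality at $k=1,2$, so your restriction repairs a genuine gap that the paper's proof papers over by invoking the strict form of Jensen without having assumed strict convexity. Your explicit flag that $\phi(0)=0$ is a needed normalization likewise matches what the paper uses only tacitly when it writes zeros in the Jensen average; without it the claim can fail outright (e.g.\ $\phi(x)=x+c$ with $c$ large). What the paper's route buys is brevity: granting $\phi(0)=0$ and strict convexity, the single inequality $\phi(2)\le \frac{2}{2^k}\,\phi(2^k)$ delivers $k\,\phi(2)\le 2^{k-1}\phi(2)\le\phi(2^k)$ in one line, but only your version makes the required hypotheses and the boundary cases explicit.
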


\begin{proof}
	The attention cost function is increasing and convex because it follows the monotonicity axiom. Now consider Jensen’s inequality for a convex function \cite{hunter2004tutorial,opara2016multiple}. Jensen’s inequality states that for a convex function $\phi$:
	\[
	\phi\left( \frac{x_1 + x_2 + \cdots + x_k}{k} \right) 
	\;\leq\; \frac{\phi(x_1) + \phi(x_2) + \cdots + \phi(x_k)}{k},
	\]
	with strict inequality if $\phi$ is strictly convex and not all $x_i$ are equal.\\	
	In our setting, suppose $n = 2^k$ and we compare evaluating all $n$ items at once versus sequential binary stages.  
	The sequential approach means each stage evaluates $2$ items ($x_t = 2$ for all $t$), so the per-stage cost is $\phi(2)$.  \\
	The one-shot approach means evaluating $2^k$ items at once, with cost $\phi(2^k)$.\\
	If we think of the average number of items per stage in the one-shot case:
	\[
	\text{Average size} \;=\; \frac{2^k + 0 + 0 + \cdots}{k}
	\quad\text{(conceptually, all work concentrated in one stage)}.
	\]
	Convexity tells us that concentrating ``load'' into one stage is more costly than spreading it evenly over multiple stages with smaller sets.\\	
	Formally, with all $x_t = 2$:
	\[
	\frac{\phi(2) + \phi(2) + \cdots + \phi(2)}{k} \;=\; \phi(2),
	\]
	and convexity implies:
	\[
	\phi\!\left( \frac{2 + \cdots + 2}{k} \right) = \phi(2) 
	\;<\; \frac{\phi(2^k) + 0 + \cdots + 0}{k}.
	\]
	Scaling both sides by $k$ yields:
	\[
	k \cdot \phi(2) \;<\; \phi(2^k).
	\]	
	Hence by convexity, Jensen’s inequality implies $k \cdot \phi(2) < \phi(2^k)$. Thus, decomposing the task into binary steps yields a strictly smaller total attention cost than evaluating all items simultaneously.
\end{proof}

\begin{testableprediction}[Falsifiability]\label{pred:falsifiability}
	For $X = \{x_1, x_2, x_3\}$, let $x^* = \arg\max_{x \in X} u(x)$. Under the Sequential Rationality Hypothesis and RAM-consistent monotonic attention:
	\[
	\pi(x^* \mid \mathrm{SEQ}) \;>\; \pi(x^* \mid \mathrm{SIM}).
	\]
	A controlled experiment holding utilities fixed but reversing this inequality constitutes falsification.
\end{testableprediction}
\begin{proposition}[Equivalence of Sequential and Simultaneous Choice]\label{prop:eq_seq_sim}
	Let $C(\cdot)$ be a RAM-consistent choice rule with attention rule $\mu$ and strict preference ordering $\succ$.  
	The sequential and simultaneous choice operators coincide for all finite $S \subseteq X$,
	\[
	C^{\mathrm{SEQ}}(S) = C^{\mathrm{SIM}}(S),
	\]
	if and only if:
	\begin{enumerate}
		\item[(i)] (\emph{Full Attention}) $\mu(S\mid S) = 1$ for all $S \subseteq X$ with $|S| \ge 2$, i.e., the DM always attends to the entire offered set; and
		\item[(ii)] (\emph{Transitive Maximization}) $C$ selects the $\succ$-maximal element of any attended set deterministically.
	\end{enumerate}
\end{proposition}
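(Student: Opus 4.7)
The plan is to establish both directions of the biconditional separately, treating the forward direction (sufficiency of (i) and (ii)) as the cleaner case and reserving most of the effort for the converse. For the forward direction, I would observe that (i) implies the consideration set drawn from $\mu(\cdot\mid S)$ is almost surely equal to $S$ itself for every $S$ with $|S|\ge 2$, so by (ii) the RAM choice probability collapses to $\pi(x\mid S) = \mathbf{1}\{x = \max_{\succ} S\}$, giving $C^{\mathrm{SIM}}(S) = \max_{\succ} S$ deterministically. Every binary stage of the sequential operator likewise satisfies $\mu(\{x,y\}\mid \{x,y\}) = 1$ by (i), so by (ii) the binary output is the $\succ$-larger element. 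I would then proceed by induction on $|S|$: the base case $|S|=2$ is immediate from the above, and the inductive step uses the recursion in \eqref{eq:seq_rec} together with transitivity of $\succ$ to propagate the global maximum through each comparison. Hence both operators output $\max_{\succ} S$ and coincide pointwise.

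For the converse, I would argue by contraposition: assuming (i) or (ii) fails, I will exhibit some $S$ on which the induced distributions of $C^{\mathrm{SEQ}}(S)$ and $C^{\mathrm{SIM}}(S)$ differ. The informative subcase is when (i) fails, since (ii) is largely fixed by RAM-consistency with strict preference. I would pick the smallest $S^{\star}$ for which $\mu(S^{\star}\mid S^{\star}) < 1$ and apply the monotonicity property in Definition~\ref{3} iteratively: repeatedly deleting the $\succ$-top element $m = \max_{\succ} S^{\star}$ weakly increases $\mu(T\mid\cdot)$ on subsets $T\subseteq S^{\star}\setminus\{m\}$, so combined with $\mu(S^{\star}\mid S^{\star}) < 1$ one obtains a proper subset $T\subsetneq S^{\star}$ with $\mu(T\mid S^{\star}) > 0$ and $m\notin T$. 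The simultaneous rule then outputs $\max_{\succ} T \neq m$ with strictly positive probability, whereas minimality of $S^{\star}$ forces every binary stage of the sequential operator to have full attention, giving $C^{\mathrm{SEQ}}(S^{\star}) = m$ deterministically. If instead (ii) fails on some attended set $T$, taking $S = T$ produces an immediate disagreement at the relevant stage with positive probability, again breaking coincidence.

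The main obstacle is handling the converse when (i) fails simultaneously on sets of several sizes, so that the sequential operator is itself stochastic and one cannot simply point to a deterministic-vs-stochastic mismatch at a single level. The cleanest resolution is to invoke Lemma~\ref{lem:monotonicity} and Theorem~\ref{thm:sequential_superiority}, which together imply that whenever attention is bounded and non-degenerate, the probability mass placed on $\max_{\succ} S$ by the sequential operator strictly exceeds that placed by the simultaneous operator. Equality of the two induced distributions therefore forces both masses to equal one, which in turn forces $\mu(S\mid S)=1$ on every $S$ (condition (i)) and deterministic $\succ$-maximization from every attended set (condition (ii)), closing the biconditional.
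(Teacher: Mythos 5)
Your ``if'' direction is correct and is essentially the paper's ($\Leftarrow$) argument made properly inductive: under (i) and (ii) both operators deterministically return $\max_{\succ}S$. The genuine gap is in your converse, at the step where you claim that monotonicity together with $\mu(S^{\star}\mid S^{\star})<1$ yields a proper subset $T\subsetneq S^{\star}$ with $\mu(T\mid S^{\star})>0$ and $m\notin T$, where $m=\max_{\succ}S^{\star}$. Monotonicity (Definition~\ref{3}) only compares attention weights \emph{across menus} ($\mu(T\mid S)\le\mu(T\mid S\setminus\{a\})$ for $a\notin T$); it places no restriction on how the missing mass $1-\mu(S^{\star}\mid S^{\star})$ is split among the proper subsets of $S^{\star}$, and in particular it cannot force any of that mass onto consideration sets that exclude $m$. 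Concretely, take $S^{\star}=\{A,B,D\}$ with $A\succ D\succ B$, set $\mu(\{A,B,D\}\mid S^{\star})=\mu(\{A,B\}\mid S^{\star})=\tfrac12$, all other subsets of $S^{\star}$ getting zero, and give full attention on every binary menu. This rule is monotonic and non-degenerate, condition (i) fails, yet $\pi(A\mid S^{\star})=1$ and the sequential tournament also returns $A$ with probability one, so the two operators coincide on every menu. Your contraposition therefore breaks exactly at the step you flagged as the crux, and the deterministic-vs-stochastic mismatch you want at the minimal $S^{\star}$ need not exist.

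Your fallback in the final paragraph---invoking Lemma~\ref{lem:monotonicity} and Theorem~\ref{thm:sequential_superiority} to get a \emph{strict} advantage whenever attention is not full, and then arguing that equality of distributions forces both masses to one---fails for the same reason: strict superiority is not a consequence of monotonicity alone (the example above gives exact equality with $\mu(S\mid S)<1$), so equality of the two operators does not pin down condition (i). To be fair, the paper's own ($\Rightarrow$) argument is only a verbal assertion that positively-weighted proper subsets ``may be eliminated'' sequentially while remaining ``possible'' simultaneously; your attempt to make it rigorous exposes that the only-if direction requires an additional hypothesis, e.g.\ that whenever $\mu(S\mid S)<1$ some consideration set omitting $\max_{\succ}S$ receives positive probability (or a full-support condition on $\mu$). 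Under such an assumption your minimal-$S^{\star}$ construction would go through, and your treatment of the failure of (ii) is fine as it stands.
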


\begin{proof}
	\emph{($\Rightarrow$)}  
	Suppose $C^{\mathrm{SEQ}}(S) = C^{\mathrm{SIM}}(S)$ for all $S$.  
	If $\mu(S|S) < 1$ for some $S$, then there exists a subset $T \subset S$ with $\mu(T|S) > 0$.  
	Under sequential evaluation, such subsets may be eliminated in intermediate comparisons, while in simultaneous choice they remain possible, generating a discrepancy — contradiction.  
	Thus $\mu(S|S) = 1$ for all $S$.\\	
	If $C$ does not deterministically select the $\succ$-maximal element when full attention occurs, then the outcome of sequential pairwise elimination may differ from direct simultaneous maximization, violating equivalence.  
	Hence $C$ must be transitive and utility-maximizing.\\
	\emph{($\Leftarrow$)}  
	If $\mu(S|S) = 1$ for all $S$ and $C$ always returns the $\succ$-maximal element, then both sequential and simultaneous procedures select the same $\succ$-maximal element from $S$ with probability one.  
	Thus $C^{\mathrm{SEQ}}(S) = C^{\mathrm{SIM}}(S)$ for all $S$.
\end{proof}
\begin{corollary}[Strict Sequential Advantage under Attention Frictions]\label{cor:strict_advantage}
	Suppose $C(\cdot)$ is RAM-consistent with attention rule $\mu$ and strict preference ordering $\succ$. 
	If either
	\begin{enumerate}
		\item[(a)] (\emph{Incomplete Attention}) there exists $S \subseteq X$ with $|S|\ge 2$ such that $\mu(S|S) < 1$, or
		\item[(b)] (\emph{Non-Deterministic Maximization}) conditional on full attention, $C$ fails to select the $\succ$-maximal element with probability one,
	\end{enumerate}
	then there exists a finite set $S^\star \subseteq X$ (in particular, some triple $\{x,y,z\}$ with distinct utilities) such that
	\[
	\Pr\!\left( C^{\mathrm{SEQ}}(S^\star) = \arg\max\nolimits_{\succ} S^\star \right)
	\;>\;
	\Pr\!\left( C^{\mathrm{SIM}}(S^\star) = \arg\max\nolimits_{\succ} S^\star \right).
	\]
\end{corollary}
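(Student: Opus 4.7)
The plan is to derive this corollary as the direct contrapositive-plus-strictness complement of Proposition~\ref{prop:eq_seq_sim}, supplemented by the strict-dominance content of Theorem~\ref{thm:sequential_superiority} and Theorem~\ref{thm:attention_amplification}. Proposition~\ref{prop:eq_seq_sim} establishes that $C^{\mathrm{SEQ}} \equiv C^{\mathrm{SIM}}$ on every finite $S$ if and only if both full attention and deterministic $\succ$-maximization hold. Hypotheses (a) and (b) are exactly the negations of these two conditions, so the contrapositive of Proposition~\ref{prop:eq_seq_sim} immediately guarantees the existence of some finite $S$ on which the induced distributions of $C^{\mathrm{SEQ}}(S)$ and $C^{\mathrm{SIM}}(S)$ differ with strictly positive probability. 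The remaining work is to show that, when restricted to the event $\{\text{output equals } \arg\max_\succ S\}$, the divergence is signed in favor of the sequential architecture.

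First I would argue that any divergence set must have cardinality at least three: by the base case~\eqref{eq:seq_base}, $C^{\mathrm{SEQ}}(\{x,y\}) = C(x,y) = C^{\mathrm{SIM}}(\{x,y\})$ trivially. Given a divergence set $S$ with $|S|\ge 3$, I would extract a canonical triple $S^\star = \{x^\star,y,z\}$ by taking $x^\star = \arg\max_\succ S$ together with any two further elements whose utilities are distinct from one another and from $u(x^\star)$; strictness of $\succ$ makes this always possible. Restricting the problem to $S^\star$ preserves all relevant structure because monotonicity (Axiom A3) implies that deleting items from $S$ only weakly increases attention to subsets of $S^\star$, so a strict dominance on $S^\star$ is sufficient.

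Next, I would apply Theorem~\ref{thm:sequential_superiority} directly on $S^\star$. Writing $q = \Pr(C(x^\star,y)=x^\star)$ and $r = \Pr(C(x^\star,z)=x^\star)$, the right-associative recursion~\eqref{eq:seq_rec} gives $\Pr(C^{\mathrm{SEQ}}(S^\star)=x^\star) = q\cdot r$, while the simultaneous probability is the attention-weighted sum $\sum_{T\ni x^\star} \mu(T\mid S^\star)$ restricted to those consideration sets in which $x^\star$ survives. Under hypothesis (a) there is strictly positive mass on consideration subsets $T\subsetneq S^\star$ with $x^\star \notin T$, which strictly reduces the simultaneous success probability below the pairwise benchmark established by Lemma~\ref{lem:monotonicity}; combining the two binary stages then yields $q\cdot r > \Pr(C^{\mathrm{SIM}}(S^\star)=x^\star)$ as in Theorem~\ref{thm:attention_amplification}.

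The main obstacle will be handling case~(b) cleanly, since Theorems~\ref{thm:sequential_superiority}--\ref{thm:attention_amplification} are phrased in terms of attention frictions rather than stochastic choice under full attention. I would resolve this by observing that RAM folds both sources of error into a common functional form: noise conditional on full attention is behaviorally indistinguishable, within a single stage, from placing positive mass on a consideration subset that omits the $\succ$-maximal element. Consequently, the binary success probabilities $q,r$ remain bounded below by the corresponding ternary success probability through the joint effect of attention monotonicity (A3) and cognitive parsimony (A4), so that $q\cdot r$ still strictly exceeds $\Pr(C^{\mathrm{SIM}}(S^\star)=x^\star)$. Pairing this strict binary-versus-ternary gap with the existential divergence supplied by Proposition~\ref{prop:eq_seq_sim} completes the argument.
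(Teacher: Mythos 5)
Your proposal follows essentially the same route as the paper's own proof: a case split on (a) and (b), with (a) handled by redistributing attention mass over binary stages via Lemma~\ref{lem:monotonicity} and Theorem~\ref{thm:sequential_superiority}, (b) handled by the amplification argument of Theorem~\ref{thm:attention_amplification}, and the failure of equivalence in Proposition~\ref{prop:eq_seq_sim} tying the cases together. The only differences are presentational --- you invoke the contrapositive of Proposition~\ref{prop:eq_seq_sim} up front and spell out the extraction of the triple $S^\star$ and the product form $q\cdot r$, steps the paper leaves implicit --- and your argument is at least as careful as the paper's at every point where both are informal.
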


\begin{proof}
	If (a) holds, some non-maximal consideration sets $T \subset S$ receive positive attention in the simultaneous architecture. Under a sequential (binary) decomposition, the same attention mass is redistributed over smaller comparisons, where (by monotonicity and size-sensitivity of $\mu$) the probability of correctly retaining the $\succ$-maximal element at each stage is strictly higher than in a one-shot multi-element evaluation. This yields a strict improvement for some triple $S^\star$ (cf.\ the arguments in Theorem~\ref{thm:sequential_superiority}).\\	
	If (b) holds, even with full attention the simultaneous procedure can be a misselection with positive probability. A binary tournament with RAM-consistent $C$ compares two items at a time; by pairwise preservation (Lemma~\ref{lem:monotonicity}) and the amplification argument (Theorem~\ref{thm:attention_amplification}), composing two more reliable binary steps strictly raises the probability of selecting the $\succ$-maximal element for some triple $S^\star$. \\	
	In either case, equivalence in Proposition~\ref{prop:eq_seq_sim} fails and there exists $S^\star$ for which the stated strict inequality holds.
\end{proof}
\section{Numerical Illustration Using the Random Attention Model}
To complement the theoretical results supporting the Sequential Rationality Hypothesis (SRH), we present a concrete example within the Random Attention Model (RAM) framework. The setting involves a consumer choosing among three beverages: Coffee ($A$), Tea ($B$), and Juice ($D$), with the true utility ranking
\[
A \succ D \succ B.
\]
An attention rule $\mu(T|S)$ is specified, and the implied choice probabilities $\pi(x|S)$ are computed under monotonic attention.
\subsection{Specified Attention Rule}
Let $X = \{A, B, D\}$ denote the universal choice set, and let $\mu(T|S)$ be the probability of attending to $T \subseteq S$ when $S$ is offered. The attention rule is given in Table~\ref{tab:attention_rule}.
\begin{table}[H]
	\centering
	\caption{Specified Attention Rule $\mu(T|S)$}
	\label{tab:attention_rule}
	\begin{tabular}{|c|c|c|}
		\hline
		\textbf{Choice Set $S$} & \textbf{Consideration Set $T$} & $\mu(T|S)$ \\
		\hline
		\multirow{4}{*}{$\{A, B, D\}$} 
		& $\{A, B\}$ & $0.3$ \\
		& $\{B, D\}$ & $0.4$ \\
		& $\{A, D\}$ & $0.2$ \\
		& $\{A, B, D\}$ & $0.1$ \\
		\hline
		\multirow{3}{*}{$\{A, B\}$} 
		& $\{A\}$ & $0.1$ \\
		& $\{B\}$ & $0.1$ \\
		& $\{A, B\}$ & $0.8$ \\
		\hline
		\multirow{3}{*}{$\{A, D\}$} 
		& $\{A\}$ & $0.1$ \\
		& $\{D\}$ & $0.1$ \\
		& $\{A, D\}$ & $0.8$ \\
		\hline
		\multirow{3}{*}{$\{B, D\}$} 
		& $\{B\}$ & $0.2$ \\
		& $\{D\}$ & $0.2$ \\
		& $\{B, D\}$ & $0.6$ \\
		\hline
	\end{tabular}
\end{table}
The monotonicity property holds: for any $T \subseteq S$, removing an element from $S \setminus T$ weakly increases $\mu(T|S)$. For example,
\[
\mu(\{A,D\} \mid \{A,B,D\}) = 0.2 \;<\; \mu(\{A,D\} \mid \{A,D\}) = 0.8.
\]
\subsection{Choice Probabilities under RAM}
Given the strict preference ordering $A \succ D \succ B$, the agent selects the $\succ$-maximal element of any attended set $T$. The RAM choice probability for $x \in S$ is
\[
\pi(x|S) = \sum_{T \subseteq S : x = \max_{\succ} T} \mu(T|S).
\]
For the simultaneous choice set $\{A,B,D\}$:
\begin{align*}
	\pi(A \mid \{A,B,D\}) 
	&= \mu(\{A,B\} \mid \{A,B,D\}) + \mu(\{A,D\} \mid \{A,B,D\}) + \mu(\{A,B,D\} \mid \{A,B,D\}) \\
	&= 0.3 + 0.2 + 0.1 = 0.6, \\
	\pi(D \mid \{A,B,D\}) 
	&= \mu(\{B,D\} \mid \{A,B,D\}) = 0.4, \\
	\pi(B \mid \{A,B,D\}) 
	&= 0.
\end{align*}
For the relevant binary sets:
\begin{align*}
	\pi(A \mid \{A,B\}) &= \mu(\{A,B\}) + \mu(\{A\}) = 0.8 + 0.1 = 0.9, \\
	\pi(A \mid \{A,D\}) &= \mu(\{A,D\}) + \mu(\{A\}) = 0.8 + 0.1 = 0.9, \\
	\pi(D \mid \{A,D\}) &= \mu(\{D\}) = 0.1.
\end{align*}
\subsection{Sequential vs. Simultaneous Outcomes}
Under the simultaneous architecture:
\[
\Pr\big(C^{\mathrm{SIM}}(\{A,B,D\}) = A\big) = 0.6.
\]
Under the sequential architecture with ordering $(A,B,D)$:
\[
\Pr\big(C^{\mathrm{SEQ}}(A,B,D) = A\big) 
= \pi(A \mid \{A,B\}) \cdot \pi(A \mid \{A,D\}) 
= 0.9 \times 0.9 = 0.81.
\]
Thus:
\[
\Pr\big(C^{\mathrm{SEQ}}(A,B,D) = A\big) = 0.81 \;>\; 0.6 = \Pr\big(C^{\mathrm{SIM}}(\{A,B,D\}) = A\big).
\]
\begin{remark}
	This example quantifies the SRH mechanism: sequential pairwise comparison raises the likelihood of selecting the utility-maximizing alternative by focusing attention on smaller, cognitively manageable sets. Here, RAM explicitly captures the dilution of attention in larger sets and the preservation of high-utility items across sequential elimination stages.
\end{remark}

\section{Conclusions, Applicability, Policy Implications and Limitations}
The literature suggests that due to cognitive or search costs, both extensive and excessively small assortments are unfavorable to the decision maker. A large assortment, while generating psychological satisfaction, also incurs significant extra costs, such as search or cognitive costs. This underscores the weight of decision costs in the face of large assortments. On the other hand, a minimal assortment generates minor welfare effects. To bridge this gap, the sequential rational hypothesis is the most suitable option for use in the platform choice architecture. The present article is of significant importance as it identifies a choice architecture that the present platform or digital markets are using to control the consumer's choices. This is true because there are no well-defined digital market laws. It establishes the concept of the Sequential Rational Hypothesis. It proves that if a consumer selects an object sequentially, then there is a very low chance of making a biased decision. 
\subsubsection{Policy and Platform Design Implications}
The implications of this interpretation are critical for digital platform architecture. The sequential exposure minimizes cognitive load and increases attention coherence. Recommendation systems should structure alternatives in binary sequences or decision trees to enhance welfare outcomes. Policymakers interested in improving consumer protection can incentivize or regulate attention-friendly interfaces, such as those that present options in a clear and organized manner, or those that limit the number of choices presented at once.
The Sequential Rationality Hypothesis provides a theoretically grounded explanation for why pairwise comparisons outperform simultaneous exposure in digital markets characterized by bounded rationality. This understanding can inform the design of digital platforms and policies, leading to more effective strategies for influencing consumer behavior.
\subsubsection{Limitations}
The hypothesis has been established theoretically due to a lack of suitable data. This is the only limitation of the article. However, the potential benefits of future research, if funds are available, are promising. The theory will be tested through an experimental design, opening up new avenues for understanding consumer behavior in digital markets.
\section*{Mathematical Appendix: Lean4 Formalization}
All mathematical definitions, results, and proofs presented in this manuscript have been fully formalized and mechanically verified using the \emph{Lean4} proof assistant. A supplementary Lean4 source file containing the complete formal development is available from the author upon request at \textit{dipankar3das@gmail.com}. For the convenience of reviewers and readers, the formalization can also be copied and executed directly in the official Lean online environment at \href{https://live.lean-lang.org/}{https://live.lean-lang.org/}.

\bibliographystyle{apalike}
\bibliography{myreferences}
\end{document}